\newtheorem{definition}{Definition}
\newtheorem{lemma}{Lemma}
\newtheorem{theorem}{Theorem}
\begin{document}
	\title{
		Direction Finding of Electromagnetic Sources on a Sparse Cross-Dipole Array Using One-Bit Measurements
	}
	\author{Zhiyong Cheng,
		Shengyao Chen,~\IEEEmembership{Member,~IEEE,}
		Qibin Shen,\\
		Jin He,\\
		and Zhong Liu,~\IEEEmembership{Member,~IEEE,}
		\thanks{This work was supported in part by the National Natural Science Foundation of China under Grants 61671245, 61401210, 61571228.}
		\thanks{Z. Cheng, C. Chen, and Z. Liu are with the School of Electronic and Optical Engineering, Nanjing University of Science and Technology, Nanjing, China. (e-mail:chengzhiyong@njust.edu.cn; chenshengyao@njust.edu.cn)}%
		\thanks{Qibin Shen is with the Department of Mathematics, University of Rochester, Rochester, NY 14642, USA.}
		\thanks{J. He is with the Shanghai Key Laboratory of Intelligent Sensing and Recognition, Department of Electronic Engineering, Shanghai Jiaotong University, Shanghai 200240, China.}}
	\maketitle

	
\begin{abstract}
Sparse array arrangement has been widely used in vector-sensor arrays because of increased degree-of-freedoms for identifying more sources than sensors.
For large-size sparse vector-sensor arrays, one-bit measurements can further reduce the receiver system complexity by using low-resolution ADCs.
In this paper, we present a sparse cross-dipole array with one-bit measurements to estimate Direction of Arrivals (DOA) of electromagnetic sources.
Based on the independence assumption of sources, we establish the relation between the covariance matrix of one-bit measurements and that of unquantized measurements by Bussgang Theorem.
Then we develop a Spatial-Smooth MUSIC (SS-MUSIC) based method, One-Bit MUSIC (OB-MUSIC), to estimate the DOAs.
By jointly utilizing the covariance matrices of two dipole arrays, we find that OB-MUSIC is robust against polarization states.
We also derive the Cramer-Rao bound (CRB) of DOA estimation for the proposed scheme.
Furthermore, we theoretically analyze the applicability of the independence assumption of sources, which is the fundamental of the proposed and other typical methods, and verify the assumption in typical communication applications.
Numerical results show that, with the same number of sensors, one-bit sparse cross-dipole arrays have comparable performance with unquantized uniform linear arrays and thus provide a compromise between the DOA estimation performance and the system complexity.
\end{abstract}
\begin{IEEEkeywords}
	Cross-dipole, sparse arrays, one-bit measurements, Cramer-Rao bound, DOA estimation
\end{IEEEkeywords}

\section{INTRODUCTION}
Signal processing for electromagnetic (EM) signals has attracted attention as the reason of carrying more information than traditional signals in the past decades \cite{nehorai1994vector,li1991angle,han2014nested}.
Many array signal processing techniques have been developed for direction of arrival (DOA) estimation using vector-sensor arrays designed for EM signals.
Benefitting from increased degrees of freedom (DOF), sparse array arrangement is widely used in both scalar and vector sensor arrays.
Specifically, there are three popular kinds of sparse arrays, minimum-redundancy arrays (MRA) \cite{moffet1968minimum}, nested arrays \cite{pal2010nested} and coprime arrays \cite{pal2011coprime}.
All of them can identify $O(N^2)$ sources with only $O(N)$ sensors \cite{pal2012application} \cite{zhang2013sparsity} by exploiting the second order statistics.
Unlike MRA, the latter two arrays are more widely used in DOA estimation during the past decade for their straightforward closed mathematical expression.

Recently, a considerable amount of articles has been published about sparse vector-sensor arrays.
For completely polarized (CP) signals, a tensor-based model using nested array has been proposed in \cite{han2014nested}, multiple parameters estimation has been studied in \cite{han2018polarization} and \cite{dong2016two}, mutual coupling reduction has been studied in \cite{yang2019spatially}, sparse reconstruction has been discussed with coprime array in \cite{si2019three} and \cite{li2018polarization}.
For partially polarized (PP) signals, \cite{tao2017stokes} provided a method to estimate Stokes vectors, \cite{he2017direction} and \cite{chang2017stokes} discussed the DOA estimation for nested arrays and coprime arrays, respectively.
From the viewpoint of hardware cost, sparse arrays are more economical than uniform linear arrays (ULA), especially when the source number is larger than the sensor number. Moreover, sparse arrays have less mutual coupling than ULAs.

One-bit measurement is another way to reduce hardware cost by recording received signals with one-bit analog-to-digital converters (ADC) \cite{boufounos20081} \cite{jacovitti1994estimation}.
Although source signals are difficult to recover, some meaningful parameters can be estimated by new methodologies in many applications, such as pulse-Doppler radars \cite{xi2018super} \cite{ren2019sinusoidal} and massive MIMO \cite{risi2014massive} \cite{mollen2016uplink}.
One-bit measurement has also been attracting a lot of interest in scalar array processing.
For ULAs, \cite{stockle20151} exploited sparse reconstruction techniques to estimate the DOA, and \cite{huang2019one} analyzed the MUSIC algorithm with one-bit covariance matrix.
For sparse arrays, \cite{liu2017one} presented a SS-MUSIC-based method for nested and co-prime arrays, and \cite{Chen2018} utilized sparse reconstruction methods for compressive sparse array.
The performance bounds of DOA estimation using one-bit measurements are derived in \cite{bar2002doa} \cite{stein2016doa}.
However, all these studies are based on scale sensor arrays. So far, little attention has been paid to one-bit measurement for vector sensor arrays to our knowledge.

In this paper, we focus on the DOA estimation using one-bit measurements for sparse vector-sensor arrays, since sparse array arrangement increases DOFs and one-bit measurements reduce the cost of ADCs.
Even though the original covariance matrix of received signals cannot be recovered from one-bit measurements, a constant relationship called \textbf {arcsin law} \cite{van1966spectrum} makes a bridge between the normalized original covariance matrix and the covariance matrix of one-bit measurements when source signals obey independent Gaussian distribution.
By recovering the normalized original covariance matrix, we provide two SS-MUSIC-based approaches to estimate DOAs.
The first one is a natural extension of that for scalar sensor arrays, while another one is specifically designed for vector-sensor arrays such that it is robust for both PP and CP signals.
We derive the Cramer-Rao bound (CRB) of DOA estimation to understand the effect of one-bit measurements for the proposed scheme.
Moreover, we propose a theoretical analysis on the independence of EM signals, which is the basic premise of our and other typical DOA estimation methods \cite{he2017direction,chang2017stokes,tao2017stokes}.
Even the premise has been widely used in lots of articles, its theoretical analysis is still missing.
We discuss the rationality of the premise in typical communication applications.

Main contributions of this paper are summarized as follows:
\begin{description}
	\label{contributions}
	\item[1)] A cross-dipoles sparse array scheme using one-bit measurements is proposed. The DOF are increased by sparse arrangement of cross-dipoles and the sampling complexity is reduced by one-bit measurements.
	\item[2)] A subspace algorithm called OB-MUSIC is developed for the proposed array to estimate DOAs. By exploiting the structure of the covariance matrix, this algorithm is robust to both PP and CP signals and its performance is comparable with that of unquantized measurements on ULAs.
	\item[3)] The CRB of one-bit sparse cross-dipole arrays is derived. We find that the Fisher Information Matrix (FIM) for one-bit measurements is a weighted version of that for the unquantized measurements.
	\item[4)] A theoretical analysis is performed to declare that it is suitable to assume that all the source signals are independent to each other in typical communication applications.
\end{description}

The rest of this paper is organized as follows. Original cross-dipoles array, the properties of EM signals and sparse arrays are reviewed in Section \ref{SM}. Then in Section \ref{OCSADOAE}, one-bit cross-dipoles sparse array scenario is given, followed by the proposed DOA estimation algorithms.The CRB is derived in Section \ref{CRB}. The theoretical analysis on the independence of EM signals is shown in Section \ref{OREMS}. Section \ref{NR} presents numerical results and Section \ref{Conclusion} concludes the paper.

\emph{Notations:}
In this paper, scalars are denoted by lowercase italic letters, vectors are denoted by lowercase bold letters, and matrices are denoted by uppercase bold letters.
Sets are denoted by uppercase hollow letters, e.g. $\mathbb{C}$.
The superscripts $*$, $T$ and $H$ denote the complex conjugate, transpose and Hermitian transposition, respectively.
The superscripts $\dagger$ denote the Moore-Penrose pseudoinverse.
$\mathbb{E}\lbrace\cdot\rbrace$, $\textrm{diag}(\cdot)$, $\textrm{trace}(\cdot)$, $\textrm{det}(\cdot)$ and $\textrm{vec}(\cdot)$ denote the expectation operator, creating diagonal matrix operator, getting trace operator, determinant operator and vectorization operator.
Angle brackets with subscripts denote getting elements of a matrix, e.g. $\langle\mathbf{A}\rangle_{m,n}$ means the element of $\mathbf{A}$ on location row $m$ and column $n$ as a scalar, and $\langle\mathbf{A}\rangle_{m,:}$ means the whole row $m$ of A as a vector.
$\mathbf{I}_L$ denotes the $L\times L$ identity matrix.
$\mathbf{0}_L$ denotes the $L\times L$ zero matrix.

\section{Signal Model}
\label{SM}
\subsection{Receive Signal Model}

Consider $K$ EM waves travelling in an isotropic and homogeneous medium and impinging on a one-dimensional array consisting of $L$ cross-dipoles, as shown in Fig. \ref{crossdipole}, where the $l$-th cross-dipole is located at the position $\omega_l\lambda/2$. Here $\lambda$ is the wavelength and $\omega_l$ is an element of set $\mathbb{S}=\lbrace \omega_1,\omega_2,..\omega_L\rbrace$ denoting the array arrangement (e.g., $\omega_l = l$ for ULA). As in \cite{li1994efficient}, each cross-dipole consists of an $x$-axis dipole and a $y$-axis dipole paralleling to $x$-axis and $y$-axis, respectively\cite{he2017direction}.
\begin{figure}[h]
	\centering
	\includegraphics[scale=0.5]{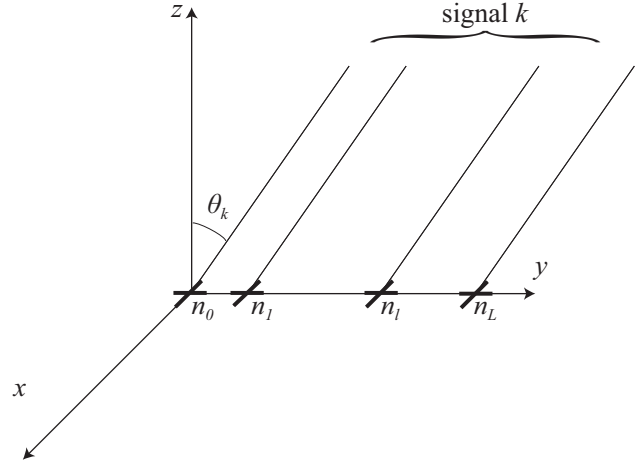}
	\caption{Cross-dipoles array.}
	\label{crossdipole}
\end{figure}

The $l$-th cross-dipole measurements ${\mathbf x}_{l} = \lbrack x_{l,1}(t),x_{l,2}(t)\rbrack^T$ can be modeled as
\begin{equation}
\label{measurementsModel}
{\mathbf x}_{l}(t)=\sum_{k=1}^Kv_l({\overline\theta}_k){\mathbf B}_{k}{\mathbf s}_k(t)+{\mathbf n}_{l}(t),
\end{equation}
where ${\mathbf B}_k=\textrm{diag}(-1,\cos(\arcsin{2\overline\theta}_k))$, ${\overline\theta}_k=\sin\theta_k/2\in\lbrack-1/2,1/2\rbrack$ and $\theta_k\in\lbrack-\pi/2,\pi/2\rbrack$ denote the cross-dipole response, the normalized DOA and the DOA of the $k$-th source, respectively, $v_l({\overline\theta}_k)=e^{j2\pi{\overline\theta}_k \omega_l}$ is the spatial response of the $l$-th cross-dipole with $\omega_l\in\mathbb{S}$,
${\mathbf s}_k(t)=\lbrack s_{k,1}(t),s_{k,2}(t)\rbrack^T$ is the signal vector, and ${\mathbf n}_{l}(t)=\lbrack n_{l,1}(t),n_{l,2}(t)\rbrack^T$ is the additive noise vector.

The covariance matrix of $\mathbf{s}_k(t)$ is given by  \cite{li1994efficient}
\begin{equation}
\label{signalcovariance}
\begin{aligned}
\mathbf{R}_{s_k} &= \mathbb{E}\left\lbrace \mathbf{s}_k(t)\mathbf{s}_k^H(t)\right\rbrace \\
&=\mathbb{E}\lbrace
\begin{bmatrix}
s_{k,1}(t)s_{k,1}^*(t)&s_{k,1}(t)s_{k,2}^*(t)\\
s_{k,2}(t)s_{k,1}^*(t)&s_{k,2}(t)s_{k,2}^*(t)
\end{bmatrix}\rbrace
\\
&\triangleq
\begin{bmatrix}
r_{k,11}&r_{k,12}\\
r_{k,12}^*&r_{k,22}
\end{bmatrix}
=p^2_k
\begin{bmatrix}
\rho_{k,11}&\rho_{k,12}\\
\rho_{k,12}^*&\rho_{k,22}
\end{bmatrix}\\
&=p^2_k\overline{\mathbf{R}}_{s_k},
\end{aligned}
\end{equation}
where $p^2_k$ and $\overline{\mathbf{R}}_{s_k}$  denote the signal power and the normalized covariance matrix, respectively.
As proposed in \cite{nehorai1994vector}, each EM signal has two spatial DOFs presented in a EM wave, so it can carry two independent signals and transmit them simultaneously.
In the Dual Signal Transmission (DST) method as in \cite{nehorai1994vector}, two independent signals are transmitted and thus the covariance matrix $\mathbf{R}_{s_k}$ is of full rank. On the other hand, in the Single Signal Transmission (SST) method, only one signal is transmitted so that the covariance matrix $\mathbf{R}_{s_k}$ is singular.
In other words, DST makes full use of two spatial DOFs while SST uses only one.

The degree of polarization (DOP) of $\mathbf{s}_k(t)$ is defined as \cite{li1994efficient}
\begin{equation}
\label{DOF}
\begin{aligned}
\eta_k &= \left[ 1-\frac {4\det(\mathbf{R}_{s_k})}{\left[ \textrm{ trace}(\mathbf{R}_{s_k})\right] ^2}\right]^{1/2} \\
&=\left[ 1-4(\rho_{k,11}\rho_{k,22}-{\left|\rho_{k,12} \right|}^2 )\right]^{1/2},
\end{aligned}
\end{equation}
with $\eta_k\in\left[ 0,1\right] $. The EM signal is completely polarized (CP) with $\eta_k = 1$ and partially polarized (PP) with $\eta_k<1$. Especially, the EM signal is unpolarized (UP) when $\eta_k = 0$.

Generally speaking, an EM signal can be decomposed as a sum of a CP part and an UP part \cite{he2017direction}, and then the DOP $\eta_k$ can be expressed as $\eta_k = {p^2_{k,c}}/({p^2_{k,c}}+{p^2_{k,u}})$, where ${p^2_{k,c}}$ and ${p^2_{k,u}}$ denote the power of the CP part and the UP part, respectively.
With this decomposition, the covariance matrix $\mathbf{R}_{s_k}$ can be expressed as
\begin{equation}
\label{CPdecomposition}
\mathbf{R}_{s_k}=p^2_{k,c}{\mathbf{Q}({\alpha_k})}{\mathbf{w}({\beta_k})}{\mathbf{w}^H({\beta_k})}{\mathbf{Q}^H({\alpha_k})}+\frac{p^2_{k,u}}{2}{\mathbf{I}_2},
\end{equation}
where
\begin{equation}
\label{Qwmatrix}
\begin{aligned}
\mathbf{Q}({\alpha_k}) &=
\begin{bmatrix}
\cos{\alpha_k}&\sin{\alpha_k}\\
-\sin{\alpha_k}&\cos{\alpha_k}
\end{bmatrix},
\\
\mathbf{w}({\beta_k}) &=
\begin{bmatrix}
\cos{\beta_k}\\
j\sin{\beta_k}
\end{bmatrix},
\end{aligned}
\end{equation}
with $\alpha_k\in\left[ -\pi/2,\pi/2\right] $ and $\beta_k\in\left[ -\pi/4,\pi/4\right] $ denoting the polarization orientation angle and polarization ellipticity angle, respectively.
Especially, when the EM signal is CP, the Jones vector is used to describe the polarization state and $\mathbf{s}_k(t)$ is rewritten as \cite{he2013near}
\begin{equation}
\label{JonesVector}
\mathbf{s}_k(t) =
\begin{bmatrix}
\cos{\varphi_k}\\
\sin{\varphi_k}e^{j\psi_k}
\end{bmatrix}s_k(t) = {\bm {\mathcal{J}}_k}s_k(t),
\end{equation}
where $\varphi_k\in\left[0,\pi/2 \right] $ and $\psi_k\in\left( -\pi,\pi\right] $ are the polarization parameters denoting the auxiliary polarization angle and the auxiliary polarization phase difference, respectively, and ${\bm{\mathcal{J}}_k} = {\left[ \cos{\varphi_k},\sin{\varphi_k}e^{j\psi_k}\right]}^T $ denotes the normalized Jones vector \cite{li2018polarization}.

For all $K$ received signals, denote the signals vector as
\begin{equation}
\label{sigvector}
\mathbf{s}_{\mathbb{K}}(t) = [\mathbf{s}^T_{1}(t),...,\mathbf{s}^T_K(t)]^T \in \mathbb{C}^{2K},
\end{equation}
where ${\mathbf s}_k(t)=\lbrack s_{k,1}(t),s_{k,2}(t)\rbrack^T$ and the signals vector received by dipoles on each axis as
\begin{equation}
\label{sigvector2}
\mathbf{s}_{\mathbb{K},m}(t) = [s_{1,m}(t),...,s_{K,m}(t)]^T \in \mathbb{C}^{K},
\end{equation}
where $m=1$ and $m=2$ means the $x$-axis and the $y$-axis, respectively.

In this paper, we use a stochastic model to describe the received signals. We propose the sparse array models and DOA estimation algorithms under the following assumptions:
\begin{description}
	\label{assumption}
	\item[1)] All the $K$ sources are independent random Gaussian processes.
	\item[2)] The DOA of each source is different from the other.
	\item[3)] The noises follow independent complex Gaussian distribution $ \mathcal{CN}(0,\sigma^2\mathbf{I}_{2K})$.
	\item[4)] The noises are statistically independent to the sources.
\end{description}

The assumption 1) has widely been used for decades.
It is also a key assumption for DOA estimation methods developed in the next section, but it has rarely been analyzed.
In Section \ref{OREMS}, we will propose a theoretical analysis declaring that this assumption is suitable for typical communication application.

\subsection{Difference Coarray}
The $x$-axis and $y$-axis dipoles measurements can be separately modeled as \cite{he2013near}
\begin{equation}
\label{dipolesmodel}
\mathbf{x}_{{\mathbb{S}},m}(t)= \mathbf{A}_{{\mathbb{S}}}\overline{\mathbf{B}}_m\mathbf{s}_{\mathbb{K},m}(t)+\mathbf{n}_{{\mathbb{S}},m}(t),
\end{equation}
where
\begin{equation}
\label{dipolesmodel2}
\begin{aligned}
m \in\lbrace1,2\rbrace&,\; \mathbb{K}=\lbrace1,...,K\rbrace,\\
\mathbf{x}_{{\mathbb{S}},m}(t)&= \left[ x_{1,m}(t),...,x_{L,m}(t)\right]^T,\\
\mathbf{A}_{{\mathbb{S}}}&=\left[\mathbf{v}_{{\mathbb{S}}}({\overline\theta}_1),...,\mathbf{v}_{{\mathbb{S}}}({\overline\theta}_K) \right], \\
\mathbf{v}_{{\mathbb{S}}}({\overline\theta}_k)&= \left[ v_1({\overline\theta}_k),...,v_L({\overline\theta}_k)\right] ^T,\\
\overline{\mathbf{B}}_m &= \textrm {diag}([b_m(\overline{\theta}_1),...,b_m(\overline{\theta}_K)]), \\
b_m(\overline{\theta}_k) &=\left\lbrace
\begin{array}{rcl}
-1,&\textrm {if}\;m=1;\\
\cos({2\arcsin{\overline{\theta}_k}}),& \textrm{if}\;m=2,
\end{array} \right. \\
\mathbf{n}_{{\mathbb{S}},m}(t)&=\left[ n_{1,m}(t),...,n_{L,m}(t)\right]^T .
\end{aligned}
\end{equation}
In the above equations, $\mathbf{A}_{{\mathbb{S}}}$ denotes the $L\times K$ array steering matrix, $\mathbf{s}_{\mathbb{K},m}(t)$ and $\mathbf{n}_{{\mathbb{S}},m}(t)$ are the signals vector and the noise vector received by dipoles on the $x$-axis or $y$-axis, respectively.

Consider the signals received by dipoles on the $m$-th axis, the covariance matrix of $\mathbf{x}_{{\mathbb{S}},m}(t)$ is given as
\begin{equation}
\label{xdipolesC}
\begin{aligned}
\mathbf{R}_{\mathbf{x}_{{\mathbb{S}},m}} =& \mathbf{A}_{{\mathbb{S}}}\overline{\mathbf{B}}_m\mathbb{E}\lbrace \mathbf{s}_{{\mathbb{K}},m}(t)\mathbf{s}^*_{{\mathbb{K}},m}(t) \rbrace \overline{\mathbf{B}}^H_m\mathbf{A}_{{\mathbb{S}}}^H+\sigma^2\mathbf{I}_L\\
=& \mathbf{A}_{{\mathbb{S}}}\mathbf{P}_m\mathbf{A}_{{\mathbb{S}}}^H+\sigma^2\mathbf{I}_L
\end{aligned}
\end{equation}
where $\mathbf{P}_m =  \textrm{diag}(b^2_m(\overline{\theta}_1)r_{1,mm},...,b^2_m(\overline{\theta}_K)r_{K,mm})$ and $\sigma^2$ is the noise power.
Vectorizing and combining duplicate entries in (\ref{xdipolesC}) leads to the following vector
\begin{equation}
\label{vectorize}
\mathbf{x}_{\mathbb{D},m} =\mathbf{W}^\dagger\textrm{vec} (\mathbf{R}_{\mathbf{x}_{{\mathbb{S}},m}})=\mathbf{A}_{\mathbb{D}}\mathbf{p}_m+\sigma^2\mathbf{e}_0.
\end{equation}
In (\ref{vectorize}), $\mathbb{D}$ is the difference coarray defined as in definition 1.
\begin{definition}[Difference coarray]
	Assume an integer set $\mathbb{S}$ denoting the sensor locations, its difference coarray is defined as $\mathbb{D} = \left\lbrace \omega_i-\omega_j\mid \omega_i,\omega_j\in\mathbb{S} \right\rbrace $.
\end{definition}
\noindent The coarray steering matrix $\mathbf{A}_{\mathbb{D}}$, the visual signal $\mathbf{p}_m$ and the normalize visual noise vector $\mathbf{e}_0$ are expressed as
\begin{equation}
\label{coarraysteering}
\begin{aligned}
\mathbf{A}_{\mathbb{D}} &= \left[\mathbf{v}_{\mathbb{D}}({\overline\theta}_1),...,\mathbf{v}_{\mathbb{D}}({\overline\theta}_K) \right], \\
\mathbf{p}_m &= [ b^2_m(\overline{\theta}_1)r_{1,mm},...,b^2_m(\overline{\theta}_K)r_{K,mm}] ^T,\\
\left\langle \mathbf{e}_0\right\rangle_{d} &= \delta_{d,0},
\end{aligned}
\end{equation}
where $\left\langle \mathbf{v}_{\mathbb{D}}({\overline\theta}_k)\right\rangle_{d}  =v_{i}({\overline\theta}_k) v_{j}^*({\overline\theta}_k)$, $d\in\mathbb{D}$ , $d = \omega_i-\omega_j$ and $\delta_{d,0}$ is the Kronecker delta.
The matrix $\mathbf{W}$ is defined as
\begin{definition}[The matrix $\mathbf{W}$ \cite{liu2017one}]
	The binary matrix  $\mathbf{W}$ has size $\left|\mathbb{S} \right|^2 $ -by-  $\left|\mathbb{D} \right| $. The columns of $\mathbf{W}$ satisfy $\left\langle \mathbf{W}\right\rangle_{:,d} = [\textrm{vec}(\mathbf{J}(d))]^T$ for $d\in\mathbb{D}$, where $\left\langle \mathbf{J}(d)\right\rangle _{\omega_i,\omega_j}\in\left\lbrace0,1 \right\rbrace ^{\left| \mathbb{S}\right|\times\left| \mathbb{S}\right| }$ is given by
	\begin{equation}\nonumber
	\label{J}
	\left\langle \mathbf{J}(d)\right\rangle _{\omega_i,\omega_j}=\left\{
	\begin{array}{rcl}
	1,&  &\text{if}\;\omega_i-\omega_j = d,\\
	0,&  &\text{otherwise}.
	\end{array}
	\right.\forall\omega_i,\omega_j\in\mathbb{S}.
	\end{equation}
\end{definition}

After vectorization and combination, $\mathbf{x}_{\mathbb{D},m}$ is considered as a measurement generated by the coarray with steering matrix $\mathbf{A}_{\mathbb{D}}$.
Although the visual signals denoted by the power of real signals are coherent, the DOAs can be estimated by subspace methods such as SS-MUSIC.
With proper array arrangement, the DOFs of difference coarray achieve $O(N^2)$ by using only $O(N)$ sensors, so that the number of resolved sources is much larger than that of the sensors.

Nested arrays and coprime arrays are two popular sparse array arrangements for their straightforward closed mathematical expression.
In order to employ the subspace algorithm, the longest uniform part of $\mathbb{D}$ expressed as $\mathbb{U}$ is selected.
For a nested array with $L_1+L_2$ cross-dipoles, as shown in Fig. \ref{arrays}(a), the congfiguration is given by \cite{pal2010nested}
\begin{equation}
\label{nestedarrays}
{\mathbb{S}}_{n}  = \lbrace 1,2,3,...L_1,L_1+1,...,L_2(L_1+1)\rbrace.
\end{equation}
The difference coarray of nested array is a ULA, where $\mathbb{D}_{n}=\mathbb{U}_{n}=\left\lbrace0,\pm1,\pm2,...,\pm{L_2(L_1+1)-1} \right\rbrace $.
But for coprime arrays, as in Fig. \ref{arrays}(b), $2M+N-1$ cross-dipoles and the sensor are located as
\begin{equation}
\label{coprimearrays}
{\mathbb{S}}_{c}  = \lbrace 0,M,...,(N-1)M,N,...,(2M-1)N \rbrace.
\end{equation}
Then we have $\mathbb{U}_{c}=\left\lbrace0,\pm1,\pm2,...,\pm MN+M-1 \right\rbrace $ which is smaller than $\mathbb{D}_{c}$, because there are some holes on the difference coarray.
For instance, in Fig. \ref{arrays}(b), $\mathbb{D}_{c}$ does not contain elements $\pm8$.

\begin{figure}[h]
	\centering
	\includegraphics[scale=0.5]{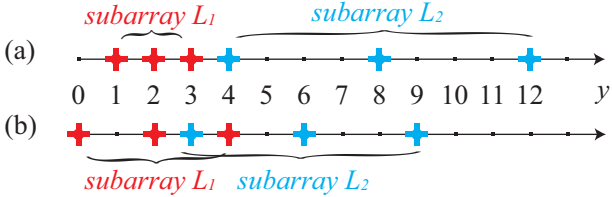}
	\caption{The configuration of (a) a nested array with $L_1 = L_2 = 3$ and (b) a coprime array with $M=2$ and $N=3$.}
	\label{arrays}
\end{figure}

\section{One-bit Cross-dipoles Sparse Array DOA Estimation}
\label{OCSADOAE}
\subsection{One-bit Measurements of Cross-dipoles Sparse Array}
One-bit measurements are particularly useful in reducing system cost of ADCs.
In the proposed array, two one-bit quantizers are employed on each dipole to quantize complex signals.
So each one-bit quantization cross-dipole has four one-bit quantizers to quantize two orthogonal parts of complex EM signals.
Then, the one-bit cross-dipoles sparse array is constructed by one-bit cross-dipoles.
For instance, a one-bit cross-dipoles nested array is shown in Fig. \ref{onebitnestedarray}.

 \begin{figure}[h]
	\centering
	\includegraphics[scale=0.5]{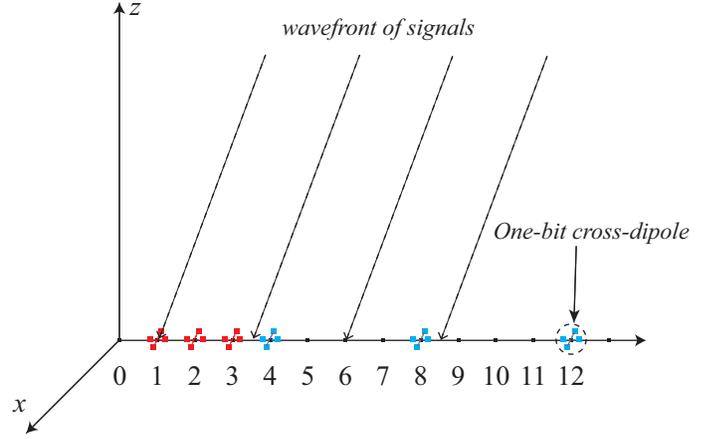}
	\caption{The configuration of one-bit cross-dipoles nested array with $6$ one-bit cross-dipoles (i.e. $24$ one-bit sensors ) and $L_1 = L_2 = 3$}
		\label{onebitnestedarray}
	\end{figure}

A complex number $c\in \mathbb{C}$ quantized by a one-bit dipole can be represented as a $\textrm {signe}(\cdot)$ operator acting on it, where the $\textrm {signe}(\cdot)$ operator is defined as \cite{Chen2018}

\begin{equation}
\label{signc}
\textrm{signe}(c) = \frac{1}{\sqrt{2}}( \textrm {sign}( \mathcal{R}(c)) +j\textrm {sign}(\mathcal{I}(c)) ),
\end{equation}
where $\textrm {sign}(\cdot)$ is the sign function acting on a real number $a\in\mathbb{R}$ expressed as follows
\begin{equation}
\label{sign}
\textrm {sign}(a) =  \left\{
\begin{array}{rcl}
1,&  &\text{if}\;a>0,\\
-1,&  &\text{otherwise},
\end{array}\right.
\end{equation}
and $\mathcal{R}(c)$ and $\mathcal{I}(c)$ get the real part and the imaginary part of $c$, respectively.

With (\ref{signc}), one-bit cross-dipoles array measurements can be expressed as
\begin{equation}
\label{oneebitmeasurement}
\mathbf{y}_{{\mathbb{S}},m}(t) = \textrm {signc}(\mathbf{x}_{{\mathbb{S}},m}(t)).
\end{equation}
Then the covariance matrix of $\mathbf{y}_{{\mathbb{S}},m}(t)$ is
\begin{equation}
\label{cofoneebit}
\mathbf{R}_{\mathbf{y}_{\mathbb{S},m}} = \mathbb{E}\lbrace \mathbf{y}_{\mathbb{S},m}(t)\mathbf{y}^H_{\mathbb{S},m}(t)\rbrace.
\end{equation}
Now, we focus on the issue how to estimate DOAs via the one-bit covariance matrix $\mathbf{R}_{\mathbf{y}_{\mathbb{S},m}}$.
Researches over the past decade provided a lot of powerful methods on DOA estimation with the unquantized covariance matrix $\mathbf{R}_{\mathbf{x}_{\mathbb{S},m}}$.
In the following subsections, we will provide a one-to-one mapping between the unquantized and one-bit covariance matrices and then develop corresponding DOA estimation algorithms.

\subsection{Reconstruction of Unquantized Covariance Matrix}

We rewrite the separately received EM signals $\mathbf{s}_{\mathbb{K},m}(t)$ in (\ref{sigvector2}) as
\begin{equation}
\label{xyaxis}
\mathbf{s}_{\mathbb{K},m}(t) =\left[ p_{1,m}\overline{s}_{1,m}(t),...,p_{K,m}\overline{s}_{K,m}(t)\right]^T,
\end{equation}
where $p^2_{k,m}$ and $\overline{s}_{k,m}(t)$ denote the power and the normalized signal of $s_{k,m}(t)$, respectively, and $p^2_{k,m}=r_{k,mm}$.
The covariance matrix of $\mathbf{x}_{{\mathbb{S}},m}(t)$ in (\ref{xdipolesC}) is rewritten as
\begin{equation}
\label{cmx}
\mathbf{R}_{\mathbf{x}_{{\mathbb{S}},m}} =\sum_{k=1}^Kp^2_{k,m}b^2_m(\overline{\theta}_k)\mathbf{v}_{{\mathbb{S}}}(\overline{\theta}_k)\mathbf{v}_{{\mathbb{S}}}^H(\overline{\theta}_k)+\sigma^2\mathbf{I}_L.
\end{equation}
The normalized covariance matrix of $\mathbf{x}_{{\mathbb{S}},m}(t)$ is then defined as
\begin{equation}
\label{ncmx}
\overline{\mathbf{R}}_{\mathbf{x}_{{\mathbb{S}},m}} = \mathbf{N}_m^{-1/2}\mathbf{R}_{\mathbf{x}_{{\mathbb{S}},m}}\mathbf{N}_m^{-1/2},
\end{equation}
where $\mathbf{N}_m$ is a diagonal matrix with $\langle \mathbf{N}_m\rangle_{l,l} = \langle \mathbf{R}_{\mathbf{x}_{{\mathbb{S}},m}}\rangle_{l,l}$.

When the sources obey independent and identically distributed (i.i.d.) Gaussian distribution, the relation between $\mathbf{R}_{\mathbf{y}_{{\mathbb{S}},m}}$ and $\overline{\mathbf{R}}_{\mathbf{x}_{{\mathbb{S}},m}}$ was established by the arcsin law \cite{van1966spectrum} and Bussgang Theorem \cite{bussgang1952crosscorrelation} as
\begin{equation}
\label{obc}
\langle\mathbf{R}_{\mathbf{y}_{{\mathbb{S}},m}}\rangle_{i,j} = \frac{2}{\pi}\textrm {arcsine}(\langle\overline{\mathbf{R}}_{\mathbf{x}_{{\mathbb{S}},m}}\rangle_{i,j}),
\end{equation}
where the $\textrm {arcsine}(\cdot)$ operator is defined as
\begin{equation}\nonumber
\textrm{arcsine}(c) = \arcsin(\mathcal{R}(c))+j\arcsin(\mathcal{I}(c)).
\end{equation}
Therefore, the normalized covariance matrix can be reconstructed by using (\ref{obc}), expressed as
\begin{equation}
\label{obc2}
\langle\overline{\mathbf{R}}_{\mathbf{x}_{{\mathbb{S}},m}}\rangle_{i,j} = \textrm {sine}(\frac{\pi}{2}\langle\mathbf{R}_{\mathbf{y}_{{\mathbb{S}},m}}\rangle_{i,j}),
\end{equation}
where
\begin{equation}\nonumber
\textrm {sine}(c) = \sin(\mathcal{R}(c))+j\sin(\mathcal{I}(c)).
\end{equation}

It is worth pointing out that the data on each axis are jointly processed for unquantized measurements \cite{yang2019spatially,he2017direction}.
However, for one-bit measurements, we reconstruct the normalized covariance matrix of each axis separately since the data in each axis is suitable for arcsin law and Bussgang theorem.
Due to nonlinear one-bit sampling, the joint recovery of covariance cannot be achieved directly.
In the future studies, we will seek to reconstruct the covariance matrix jointly or estimate the parameters without reconstruction.

The relation between the unquantized covariance matrix and its normalized form is declared by the following lemma.
\begin{lemma}[Lemma $1$ in \cite{liu2017one}]
	\label{norlemma}
	If the sources are all independent to each other, we have $\mathbf{R}_{\mathbf{x}_{{\mathbb{S}},m}} = N_m\overline{\mathbf{R}}_{\mathbf{x}_{{\mathbb{S}},m}}$, where $N_m = \sum^K_{k=1}p^2_{k,m}b^2_m(\overline{\theta}_k)+\sigma^2 >0$.
\end{lemma}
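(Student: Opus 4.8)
The plan is to compute the diagonal of $\mathbf{R}_{\mathbf{x}_{\mathbb{S},m}}$ explicitly, observe that source independence forces it to be the \emph{same} value $N_m$ at every sensor, and then read off the claim directly from the definition (\ref{ncmx}) of the normalized covariance matrix.

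First I would invoke (\ref{cmx}): under assumption~1) the matrix $\mathbf{P}_m=\mathbb{E}\{\mathbf{s}_{\mathbb{K},m}(t)\mathbf{s}_{\mathbb{K},m}^H(t)\}$ in (\ref{xdipolesC}) is diagonal, so
\[
\mathbf{R}_{\mathbf{x}_{{\mathbb{S}},m}}=\sum_{k=1}^K p^2_{k,m}b^2_m(\overline{\theta}_k)\,\mathbf{v}_{{\mathbb{S}}}(\overline{\theta}_k)\mathbf{v}_{{\mathbb{S}}}^H(\overline{\theta}_k)+\sigma^2\mathbf{I}_L .
\]
Since $\langle\mathbf{v}_{{\mathbb{S}}}(\overline{\theta}_k)\rangle_{l}=e^{j2\pi\overline{\theta}_k\omega_l}$ has unit modulus, the $l$-th diagonal entry of each rank-one term $\mathbf{v}_{{\mathbb{S}}}(\overline{\theta}_k)\mathbf{v}_{{\mathbb{S}}}^H(\overline{\theta}_k)$ equals $\lvert e^{j2\pi\overline{\theta}_k\omega_l}\rvert^2=1$ for every $l$. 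Hence
\[
\langle\mathbf{R}_{\mathbf{x}_{{\mathbb{S}},m}}\rangle_{l,l}=\sum_{k=1}^K p^2_{k,m}b^2_m(\overline{\theta}_k)+\sigma^2=:N_m ,
\]
independently of $l$, so $\mathbf{N}_m=N_m\mathbf{I}_L$. Moreover $N_m>0$ because each $p^2_{k,m}b^2_m(\overline{\theta}_k)$ is nonnegative and $\sigma^2>0$. Consequently $\mathbf{N}_m^{-1/2}=N_m^{-1/2}\mathbf{I}_L$, and substituting into (\ref{ncmx}) gives
\[
\overline{\mathbf{R}}_{\mathbf{x}_{{\mathbb{S}},m}}=N_m^{-1/2}\mathbf{I}_L\,\mathbf{R}_{\mathbf{x}_{{\mathbb{S}},m}}\,N_m^{-1/2}\mathbf{I}_L=\frac{1}{N_m}\mathbf{R}_{\mathbf{x}_{{\mathbb{S}},m}},
\]
which rearranges to $\mathbf{R}_{\mathbf{x}_{{\mathbb{S}},m}}=N_m\overline{\mathbf{R}}_{\mathbf{x}_{{\mathbb{S}},m}}$, as claimed.

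The calculation is routine; the only point that genuinely requires the hypothesis — and hence the step I would flag as the crux — is the passage to (\ref{cmx}). Were the sources correlated, $\mathbf{P}_m$ would carry off-diagonal entries, and the diagonal of $\mathbf{R}_{\mathbf{x}_{{\mathbb{S}},m}}$ would acquire cross terms proportional to $e^{j2\pi(\overline{\theta}_k-\overline{\theta}_{k'})\omega_l}$ that truly depend on $\omega_l$; then $\mathbf{N}_m$ would no longer be a scalar multiple of $\mathbf{I}_L$ and the proportionality would break. So independence (assumption~1) is precisely what collapses the diagonal of the array covariance to a single constant $N_m$, and everything else is bookkeeping.
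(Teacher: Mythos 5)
Your proof is correct, and it is the standard argument for this fact: source independence makes $\mathbf{P}_m$ diagonal, the unit-modulus steering entries then force every diagonal element of $\mathbf{R}_{\mathbf{x}_{{\mathbb{S}},m}}$ to equal the single constant $N_m$, so $\mathbf{N}_m=N_m\mathbf{I}_L$ and the normalization in (\ref{ncmx}) reduces to division by $N_m$. The paper itself supplies no proof --- it imports the result as Lemma~1 of \cite{liu2017one} --- but the cited proof proceeds exactly as yours does, and your closing remark correctly identifies why independence is essential (correlated sources would introduce $\omega_l$-dependent cross terms on the diagonal, breaking the uniform scaling).
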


Lemma \ref{norlemma} demonstrates that the normalized covariance matrix is obtained by scaling the unquantized covariance matrix with a positive number $N_m$.
The two covariance matrices share the same vector space, and thus the DOAs can be estimated from $\overline{\mathbf{R}}_{\mathbf{x}_{{\mathbb{S}},m}}$ by using subspace methods.

In fact, $N_m$ is the total power been received by the $m$-th axis, which is lost during the one-bit measurements.
Then we cannot estimate the parameters that are related to the ratio of signals power on the two axis such as the polarization orientation angle or the polarization ellipticity angle.
However, the unknown $N_m$ has no effect on the DOA estimation since the following proposed method is based on the normalized covariance matrix $\overline{\mathbf{R}}_{\mathbf{x}_{{\mathbb{S}},m}}$.
If $N_m$ is required, we can utilize time-varying threshold-based one-bit measurement methods as in \cite{xi2020gridless} and \cite{fu2018quantized}.

\subsection{DOA Estimation Using SS-MUSIC}
In this subsection, we propose two DOA algorithms based on SS-MUSIC, the first one called OB-MUSIC1 estimates DOA with $\mathbf{R}_{\mathbf{y}_{{\mathbb{S}},1}}$ or $\mathbf{R}_{\mathbf{y}_{{\mathbb{S}},2}}$ separately, while the second one called OB-MUSIC2 solves the DOA estimation with a combination of the two covariance matrices.
The OB-MUSIC mentioned above refers specifically to OB-MUSIC2.

\begin{algorithm}
	\label{algorithm1}
	\caption{OB-MUSIC1}
	\KwIn{$\mathbf{R}_{\mathbf{y}_{{\mathbb{S}},m}}$, number of signals $K$, array configration $\mathbb{D}$}
	\KwOut{normalized DOAs of signals $ \overline{\theta}_k $}
	1) Consider $\mathbf{R}_{\mathbf{y}_{{\mathbb{S}},m}}$ for $m=1$ or $m=2$ separately and select one of the axis.
	 As in (\ref{vectorize}), the coarray measurement been selected is
	\begin{equation}
	\label{music1vv}
	\overline{\mathbf{x}}_{\mathbb{D},m} = \mathbf{W}^\dag\textrm{vec}(\textrm{sine}(\frac{\pi}{2}\mathbf{R}_{\mathbf{y}_{{\mathbb{S}},m}})).
	\end{equation}\
	
	2) Select measurements of the longest uniform part $\mathbb{U}$ from $\overline{\mathbf{x}}_{\mathbb{D},m}$, which is expressed as $\overline{\mathbf{x}}_{\mathbb{U},m}$. Then, construct the coarray covariance matrix as
	\begin{equation}
	\label{coarraycc}
	\langle\widetilde{\mathbf{R}}_{m}\rangle_{n_1,n_2} = \langle\overline{\mathbf{x}}_{\mathbb{U},m}\rangle_{n_1-n_2},
	\end{equation}
	where $n_1,n_2\in\mathbb{U}^+$ and $\mathbb{U}^+\subset\mathbb{U}$ is construct by all non-negative elements in $\mathbb{U}$.
	
	3) Perform eigen-decomposition of $\widetilde{\mathbf{R}}_{m}$, select $\textrm{size}(\mathbb{U}^+) - K$ smallest eigenvalue corresponding eigenvectors as column vectors to construct the noise subspace $\mathbf{U}_n$, where $\textrm {size}(\cdot)$ is the operator to get the size of sets.
	
	4) Compute the $K$ values of $\overline{\theta}$ which maximize the MUSIC spectrum proposed as follow
	\begin{equation}
	\label{musicsp11}
	P_m(\overline{\theta}) = \frac{1}{\mathbf{v}^H_{\mathbb{U}^+}(\overline{\theta})\mathbf{U}_n\mathbf{U}^H_n\mathbf{v}_{\mathbb{U}^+}(\overline{\theta})},
	\end{equation}
	then, the $K$ values of $\overline{\theta}_k$ are the estimated DOA.
\end{algorithm}

The flowchart of OB-MUSIC1 is shown in Algorithm \ref{algorithm1}.
Frankly speaking, we give OB-MUSIC1, which is similar to SS-MUSIC used in one-bit scalar sensor arrays \cite{liu2017one}, to make our motivation of OB-MUSIC2 more clear.
OB-MUSIC1 verifies that the DOAs of EM sources can be estimated by treating dipoles on any axis as a scalar array.
However, one-bit scalar arrays is not powerful enough to estimate DOAs of EM sources since the signal power is partially dropped, and the final DOA estimate also depends on the selection of axis.
To reveal this shortcoming, we rewrite (\ref{CPdecomposition}) as
$$\mathbf{R}_{\mathbf{s}_k} = \frac{p^2_k(1-\eta_k)}{2}\mathbf{I}_2+p^2_k\eta_k{\bm {\mathcal{J}}_{k,c}}{\bm {\mathcal{J}}^H_{k,c}},$$
where ${\bm {\mathcal{J}}_{k,c}}$ is defined as in (\ref{JonesVector}) with subscript $c$ denotes the CP part of the $k$-th signal. To this end, we have
$$p^2_{k,1} = p^2_k[(1-\eta_k)/2 + \eta_k\cos^2\varphi_k],$$
$$p^2_{k,2} = p^2_k[(1-\eta_k)/2 + \eta_k\sin^2\varphi_k],$$
with $p^2_{k,1}+p^2_{k,2} = p^2_k$.
It is clear that $p^2_k$ is divided into $p^2_{k,1}$ and $p^2_{k,2}$.
The scalar array on $x$-axis and $y$-axis can only use $p^2_{k,1}$ and $p^2_{k,2}$, respectively.
Especially on $y$-axis, the received signal power is multiplied by a real number $b^2_{2}(\overline{\theta}_k)$ less than $1$.
If $\overline{\theta}_k$ is close to $\pm\frac{1}{2}$, the performance on $y$-axis will be attenuated severely.

Specifically, the division of the power is determined by the unknown $\eta_k$ and $\varphi_k$.
For instance, we consider the scalar array made by dipoles on $x$-axis ($m=1$).
Then we define the power losses of the $k$-th signal received by the scalar array as $l_{k,1} = 10\log_{10}\frac{p^2_{k}}{p^2_{k,1}} =  -10\log_{10}[(1-\eta_k)/2 + \eta_k\cos^2\varphi_k]$.
We find that $l_{k,1}$ is a monotonically increasing function of both $\eta_k$ and $\varphi_k$ when $\varphi_k \in [\pi/4,\pi/2]$.
If $\eta_k > 0.9$ and  $\varphi_k > 4\pi/9$, we will have $l_{k,1}>11.1 \textrm{dB}$.
We also have the similar results for $l_{k,2} = 10\log_{10}\frac{p^2_{k}}{p^2_{k,2}}$ ($m=2$) in a given range of $\eta_k$ and $\varphi_k$.
This property of $l_{k,m}$ indicates that the performance of the scalar array is strongly related to $\eta_k$ and $\varphi_k$.
Specifically, for a CP signal with $\eta_k = 1$, if $\varphi_k > 2\pi/5$, we will have $l_{k,1}>10.2 \textrm{dB}$.

The power losses can be improved by constructing a summation matrix
\begin{equation}
\label{summation}
	\widetilde{\mathbf{R}} = \sum_{m=1}^{2}\widetilde{\mathbf{R}}_{m}.
\end{equation}
With the summation, the signal power received is $p^2_{k,sum} = p^2_{k,1}+b^2_{2}(\overline{\theta}_k)p^2_{k,2}$.
In comparison with $p^2_{k,1}$ and $p^2_{k,2}$, $p^2_{k,sum}$ is less sensitive to $\eta_k$ and $\varphi_k$, even though it partially relies on the DOA.
A qualitative explanation is that when the DOA is identifiable (i.e. $\overline{\theta}_k$ is not nearly to $\pm\frac{1}{2}$ and $p^2_{k}$ is large enough), the power losses of one axis can be countervailed by the power on another axis.
For instance, $\overline{\theta}_k$ is set to be $0.4$, if $[\eta_k,\varphi_k]$ are set as $[0.9,4\pi/9]$, $[1,4\pi/9]$ and $[1,17\pi/36]$, we will have $[l_{k,sum},l_{k,x}]$ approximately equal to $[3.8 \textrm{dB},11.1 \textrm{dB}]$, $[4.2 \textrm{dB},15.2 \textrm{dB}]$ and $[4.3 \textrm{dB},21.1 \textrm{dB}]$, respectively, where $l_{k,sum}$ is the power losses after summation defined as $l_{k,sum} = 10\log_{10}\frac{p^2_{k}}{p^2_{k,sum}}$.

After the summation, the coarray measurement $\overline{\mathbf{x}}_{\mathbb{U}}$ is constructed by selecting the longest uniform part $\mathbb{U}$ from $\overline{\mathbf{x}}_{\mathbb{D}}$ which is expressed as
\begin{equation}
\label{sumcomeas}
	\overline{\mathbf{x}}_{\mathbb{D}} = \mathbf{W}^\dag \textrm{vec}(\sum_{k=1}^Kp^2_{k,sum}\mathbf{v}_{{\mathbb{S}}}(\overline{\theta}_k)\mathbf{v}_{{\mathbb{S}}}^H(\overline{\theta}_k)+\sigma^2_{sum}\mathbf{I}_L),
\end{equation}
where $p^2_{k,sum} = \sum_{m=1}^{2}\frac{p^2_{k,m}b^2_m(\overline{\theta}_k)}{N_m}$ and $\sigma^2_{sum}=\sum_{m=1}^{2}\frac{\sigma^2}{N_m}$.
The steering vectors in (\ref{sumcomeas}) are the same as in $\overline{\mathbf{x}}_{\mathbb{D},m}$. The DOAs can still be estimated by SS-MUSIC.
To this end, we propose OB-MUSIC2 to estimate DOAs in Algorithm \ref{algorithm2}.

\begin{algorithm}
	\label{algorithm2}
	\caption{OB-MUSIC2}
	\KwIn{$\mathbf{R}_{\mathbf{y}_{{\mathbb{S}},m}}$, number of signals $K$, array configration $\mathbb{D}$}
	\KwOut{DOA of signals $ \overline{\theta}_k $}
	
	1) Construct the coarray covariance matrices $\widetilde{\mathbf{R}}_{m}$ as in Algorithm \ref{algorithm1}, then make a summation as
	\begin{equation}
		\label{sumr}
		\widetilde{\mathbf{R}}=\widetilde{\mathbf{R}}_{1}+\widetilde{\mathbf{R}}_{2}.
	\end{equation}
	
	2) Employ MUSIC on $\widetilde{\mathbf{R}}$, with spectrum been proposed as
	\begin{equation}
	\label{musicsp12}
	P(\overline{\theta}) = \frac{1}{\mathbf{v}^H_{\mathbb{U}^+}(\overline{\theta})\mathbf{U}_n\mathbf{U}^H_n\mathbf{v}_{\mathbb{U}^+}(\overline{\theta})},
	\end{equation}
	thus, all of $K$ DOAs are estimated.
\end{algorithm}

The computational complexity of OB-MUSIC2 is comparable as the SS-MUSIC for unquantized measurements in \cite{liu2015remarks}.
The additional part of OB-MUSIC2 is the covariance matrix reconstruction showed in (\ref{obc2}), taking $O(ZN)$ operations, where $Z$ is the number of snapshots and $N$ is defined as $N=(\left| \mathbb{D} \right|+1 )/2$.
As the complexity of SS-MUSIC is $O(ZN+N^3)$ \cite{liu2015remarks}, we find that the total complexity of OB-MUSIC2 is $O(ZN+ZN+N^3)$.
The complexity of proposed method is still dominated by the eigen-decomposition, which requires $O(N^3)$ computations.

\section{Cramer-Rao bound}
\label{CRB}
This section derives the CRB for the proposed DOA estimation under the assumptions made in section \ref{SM}.
The expression of CRB for unquantized cross-dipole array is also shown as a comparison.
Furthermore, we discuss the effect of one-bit measurements on DOA estimation based on the CRBs.

Let us rewrite the received signals in (\ref{dipolesmodel}) as
\begin{equation}
	\label{receiveSignalCRB}
	\mathbf{x}_{\mathbb{T}}(t)=\mathbf{A}_{\mathbb{T}}\mathbf{s}_{\mathbb{T}}(t)+\mathbf{n}_{\mathbb{T}}(t),
\end{equation}
where $$
\begin{aligned}
	\mathbf{x}_{\mathbb{T}}(t) &= [\mathbf{x}^T_{{\mathbb{S}},1}(t),\mathbf{x}^T_{{\mathbb{S}},2}(t)]^T,\\
	\mathbf{A}_{\mathbb{T}} &=
	\begin{bmatrix}
	\mathbf{A}_{{\mathbb{S}}}&0\\
	0&\mathbf{A}_{{\mathbb{S}}}
	\end{bmatrix},\\
	\mathbf{s}_{\mathbb{T}}(t)&=
	\begin{bmatrix}
	\overline{\mathbf{B}}_1\mathbf{s}_{\mathbb{K},1}(t)\\
	\overline{\mathbf{B}}_2\mathbf{s}_{\mathbb{K},2}(t)
	\end{bmatrix},\\
	\mathbf{n}_{\mathbb{T}}(t)&=[\mathbf{n}^T_{{\mathbb{S}},1}(t),\mathbf{n}^T_{{\mathbb{S}},2}(t)]^T.
\end{aligned}
$$
The one-bit measurements can be described as
 \begin{equation}
 \label{OBreceiveSignalCRB}
 \mathbf{y}_{\mathbb{T}}(t)=\textrm {signe}(\mathbf{x}_{\mathbb{T}}(t)).
 \end{equation}
Then the set of deterministic but unknown parameters to be estimated by one snapshot $z$ is $\mathbf{\Theta} = \left\lbrace \overline{\theta}_k, A_{k,1},A_{k,2},\kappa_{k,1},\kappa_{k,2} \right\rbrace_{k=1}^K $, where $A_{k,m}$ is the magnitude of the $k$th element in $\overline{\mathbf{B}}_m\mathbf{s}_{\mathbb{K},m}(z)$ and $\kappa_{k,m}$ is the phase.

The probability mass function (PMF) of $\mathbf{y}_{\mathbb{T}}(z)$ measured by one snapshot $z$, denoted as $p(\mathbf{y}_z\arrowvert\mathbf{\Theta})$, is expressed as
 \begin{equation}
 	\label{PMF}
 	p(\mathbf{y}_z\arrowvert\mathbf{\Theta})=\prod_{l=1}^{2L}p(\mathcal{R}\left\lbrace [\mathbf{y}_z]_l\right\rbrace \arrowvert\mathbf{\Theta})p(\mathcal{I}\left\lbrace [\mathbf{y}_z]_l\right\rbrace \arrowvert\mathbf{\Theta})
 \end{equation}
 where
 \begin{equation}
 	\begin{aligned}
 	p(\mathcal{R}\left\lbrace [\mathbf{y}_z]_l\right\rbrace \arrowvert\mathbf{\Theta}) &= \mathbb{P}(\mathcal{R}\left\lbrace [\mathbf{y}_z]_l\right\rbrace =1\arrowvert\mathbf{\Theta})^{\frac{1+\mathcal{R}\left\lbrace [\mathbf{y}_z]_l\right\rbrace}{2}}\\
 	\times &\mathbb{P}(\mathcal{R}\left\lbrace [\mathbf{y}_z]_l\right\rbrace =-1\arrowvert\mathbf{\Theta})^{\frac{1-\mathcal{R}\left\lbrace [\mathbf{y}_z]_l\right\rbrace}{2}}
 	\end{aligned}.
 \end{equation}

Let $\mathbf{r}_z$ and $\mathbf{i}_z$ denote the real and imaginary parts of $\mathbf{A}_{\mathbb{T}}\mathbf{s}_{\mathbb{T}}(z)$.
Since $\mathcal{R}\left\lbrace [\mathbf{x}_z]_l\right\rbrace \sim \mathcal{N}(\mathbf{r}_z,\frac{1}{2}\sigma^2)$ then
\begin{equation}
	\begin{aligned}
	\mathbb{P}(\mathcal{R}\left\lbrace [\mathbf{y}_z]_l\right\rbrace =1\arrowvert\mathbf{\Theta})&=\mathbb{P}(\mathcal{R}\left\lbrace [\mathbf{x}_z]_l\right\rbrace \geq 0\arrowvert\mathbf{\Theta})\\
	&=\Phi(\frac{[\mathbf{r}_z]_l}{\sigma})\\
	\mathbb{P}(\mathcal{R}\left\lbrace [\mathbf{y}_z]_l\right\rbrace =-1\arrowvert\mathbf{\Theta})&=\mathbb{P}(\mathcal{R}\left\lbrace [\mathbf{x}_z]_l\right\rbrace < 0\arrowvert\mathbf{\Theta})\\
	&=1-\Phi(\frac{[\mathbf{r}_z]_l}{\sigma})
	\end{aligned}
\end{equation}
where $\Phi(x) = \frac{1}{\sqrt{\pi}}\int_{-\infty}^{x}e^{-t^2}dt$. $p(\mathcal{I}\left\lbrace [\mathbf{y}_z]_l\right\rbrace \arrowvert\mathbf{\Theta})$ can be derived similar with $\mathbf{i}_z$ as above.

Using the results in \cite{fu2018quantized} and \cite{xi2020gridless}, the Fisher Information Matrix (FIM) for one-bit data measured by snapshot $z$ is
\begin{equation}
	\mathbf{I}_z(\mathbf{\Theta}) = \sum_{l=1}^L(\mathbf{I}_{z,l}^R(\mathbf{\Theta})+\mathbf{I}_{z,l}^I(\mathbf{\Theta}))	
\end{equation}
where
\begin{equation}
	\begin{aligned}
	\label{I}
	\mathbf{I}_{z,l}^R(\mathbf{\Theta})&=\frac{2}{\sigma^2}\omega\left( \frac{[\mathbf{r}_z]_l}{\sigma}\right) \left( \frac{\partial[\mathbf{r}_z]_l}{\partial\mathbf{\Theta}}\right) \left( \frac{\partial[\mathbf{r}_z]_l}{\partial\mathbf{\Theta}}\right) ^T\\
	\mathbf{I}_{z,l}^I(\mathbf{\Theta})&=\frac{2}{\sigma^2}\omega\left( \frac{[\mathbf{i}_z]_l}{\sigma}\right) \left( \frac{\partial[\mathbf{i}_z]_l}{\partial\mathbf{\Theta}}\right) \left( \frac{\partial[\mathbf{i}_z]_l}{\partial\mathbf{\Theta}}\right) ^T.
	\end{aligned}
\end{equation}
with $\omega(x)=\frac{\textmd{exp}(-2x^2)}{2\pi\Phi(x)[1-\Phi(x)]}$.

For all $Z$ snapshots, the FIM is given as
\begin{equation}
	\mathbf{I}(\mathbf{\Theta}) = \sum_{z=1}^Z\mathbf{I}_z(\mathbf{\Theta}).
\end{equation}
Then, the CRB of DOAs using one-bit measurements is obtained by taking the first $k$ diagonal elements from  $\mathbf{I}^{-1}(\mathbf{\Theta})$.

Additionally, the FIM without quantization is given as
\begin{equation}
\mathbf{\overline{I}}(\mathbf{\Theta}) = \sum_{z=1}^Z\mathbf{\overline{I}}_z(\mathbf{\Theta}).
\end{equation}
where
$$
\mathbf{\overline{I}}_z(\mathbf{\Theta}) = \sum_{l=1}^L(\mathbf{\overline{I}}_{z,l}^R(\mathbf{\Theta})+\mathbf{\overline{I}}_{z,l}^I(\mathbf{\Theta}))
$$
and
\begin{equation}
\begin{aligned}
\label{Ibar}
\mathbf{\overline{I}}_{z,l}^R(\mathbf{\Theta})&=\frac{2}{\sigma^2}\left( \frac{\partial[\mathbf{r}_z]_l}{\partial\mathbf{\Theta}}\right) \left( \frac{\partial[\mathbf{r}_z]_l}{\partial\mathbf{\Theta}}\right) ^T\\
\mathbf{\overline{I}}_{z,l}^I(\mathbf{\Theta})&=\frac{2}{\sigma^2} \left( \frac{\partial[\mathbf{i}_z]_l}{\partial\mathbf{\Theta}}\right) \left( \frac{\partial[\mathbf{i}_z]_l}{\partial\mathbf{\Theta}}\right) ^T.
\end{aligned}
\end{equation}
Then, the CRB of DOAs without quantization is obtained by taking the first $k$ diagonal elements from  $\mathbf{\overline{I}}^{-1}(\mathbf{\Theta})$.

Compared with (\ref{Ibar}), the FIM for one-bit measurements in (\ref{I}) is a weighted version of that for unquantized measurements.
The weight function, playing an important role on the FIM, is expressed as $\omega\left( \frac{[\mathbf{r}_z]_l}{\sigma}\right)$ for the real part and $\omega\left( \frac{[\mathbf{i}_z]_l}{\sigma}\right)$ for the imaginary part.
We make the following comments based on the property of $\omega(x)$ shown in Fig. \ref{omegaFigure}:

 \begin{figure}[h]
	\centering
	\includegraphics[scale=0.5]{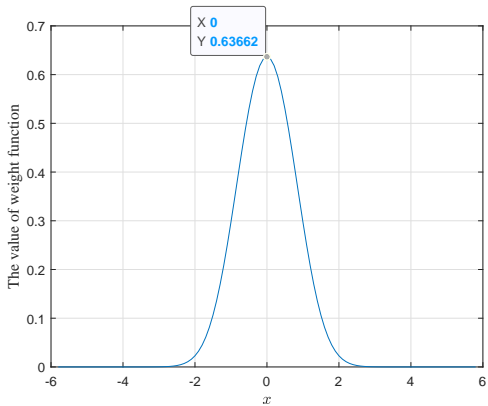}
	\caption{The weight function $\omega(x)$}
	\label{omegaFigure}
\end{figure}
\begin{description}
	\label{CRBanalysis}
	\item[1)] The upper bound of $\omega(x)$ is $\frac{2}{\pi}\approx0.6366$, which means $\mathbf{I}(\Theta)\preceq\frac{2}{\pi}\mathbf{\overline{I}}(\mathbf{\Theta})$.
	$\omega(x)$ achieves the upper bound if and only if $[\mathbf{r}_z]_l=[\mathbf{i}_z]_l=0$ for all $z$ and $l$, which means the signals are zero in finite SNR scenarios.
	As this, the information loss caused by one-bit measurements is more than $2\text{dB}$.
	\item[2)] When the signals are fixed, $w(x)$ is a decrease function as the noise power $\sigma^2$ increased, which means the information loss in one-bit measurements is larger in low SNR regime than in the high SNR regime.
\end{description}

\section{Independence Analysis of EM Signals}
\label{OREMS}
The critical assumption used in the proposed method is that all $K$ signals are independent. This assumption has been widely acknowledged in scalar array processing, but in vector-sensor array processing, the assumption about independence of signals have three aspects:
1) rewrite (\ref{sigvector}) as
\begin{equation}
	\label{sigvector3}
	\begin{aligned}
	\widehat{\mathbf{s}}_\mathbb{K}(t) =& [s_1(t),...,s_{K_{cp}}(t),s_{K_{cp}+1,1}(t),\\
							  &s_{K_{cp}+1,2}(t),...,s_{K,1}(t),s_{K,2}(t)]
	\end{aligned}
\end{equation}
where $0\leq K_{cp}\leq K$ is the number of CP signals, and then assume all the $2K-K_{cp}$ elements in (\ref{sigvector3}) are independent as in \cite{li1994efficient}; 2) the covariance matrix of (\ref{sigvector3}) is assumed to be of full rank as in \cite{ho1999estimating}; 3) the same as our assumption as in \cite{he2017direction}.
However, the first one can not fully represent the EM signals especially the PP signals, for example, if the $i$-th EM wave is modulated by a summation of two circular polarization signals in the DST mode as in \cite{nehorai1994vector}, $s_{i,1}$ and $s_{i,2}$ in (\ref{sigvector3}) will not be independent.
On the other hand, the second one can not work with sparse arrays and the arcsin law.
In fact, our assumption is weaker than the second one but stronger than the first one.
It is suitable for sparse arrays and the arcsin law. But its ability on representing EM signals has not been theoretically analyzed to our best knowledge.
To this end, we validate that our assumption is suitable in communication.

For the DST method in communication, the envelope of an EM wave can be regarded as the summation of two spatially orthogonal signals as in \cite{nehorai1994vector} and in \cite{schwartz1995communication} $$\mathbf{s}(t) = \mathbf{s}_A(t)+\mathbf{s}_B(t)=\mathbf{w}_As_A(t)+\mathbf{w}_Bs_B(t),$$
where $s_A(t)$ and $s_B(t)$ are transmitted signals which can be assumed to be zero-mean independent Gaussian processes as in \cite{brillinger1985maximum}. As this, we have
\begin{equation}\nonumber
\begin{aligned}
\mathbf{w}_A,\mathbf{w}_B\in&\mathbb{C}^{2\times1},\;\mathbf{w}_A^H\mathbf{w}_B=0,\; |\mathbf{w}_A| = |\mathbf{w}_B	| =1,\\
&\textrm {and}\;\mathbb{E}\left\lbrace s_A(t)s^*_B(t)  \right\rbrace =0.
\end{aligned}
\end{equation}
Then we can rewrite the EM signal vector as
\begin{equation}
	\label{safterop1}
	\begin{aligned}
	\mathbf{s}(t) & = [\mathbf{w}_A,\mathbf{w}_B][s_{A}(t),s_{B}(t)]^T\\
	&= \mathbf{\Gamma}\mathbf{\Lambda}[\overline{s}_{A}(t),\overline{s}_{B}(t)]^T,\\
	\mathbf{\Gamma}&=
	\begin{bmatrix}
	\cos\varphi&\sin\varphi\\
	-\sin\varphi e^{j\psi}&\cos\varphi e^{j\psi}
	\end{bmatrix},\\
	\mathbf{\Lambda} &=
	\begin{bmatrix}
	p_{A}e^{j\phi_A}&0\\
	0&p_{B}e^{j\phi_B}
	\end{bmatrix},
	\end{aligned}
\end{equation}
where $\mathbf{\Gamma}$ is the orthogonal basis normalized by $[\mathbf{w}_A,\mathbf{w}_B]$, $\overline{s}_{A}(t)$ and $\overline{s}_{B}(t)$ are normalized signals, $p^2_{A}$ and $p^2_{B}$ are signal power of $s_{A}(t)$ and $s_{B}(t)$, respectively, $\phi_A$ and $\phi_B$ denote the phase changed by normalizing.
In other hands, for the SST method, the signals is give as in (\ref{JonesVector}), which is a special case of (\ref{safterop1}) with $p^2_{A}=0$ or $p^2_{B}=0$.
Without loss of generality, we assume $p^2_{B}=0$. As this, $\overline{s}_{B}(t)$ does not contribute to the signal vector.

The discussions above are about one EM wave. In the following, we consider $K$ EM waves impinging on a sensor array.
The number of signals transmitted by $K$ EM waves is denoted as $q$, where $K\leq q \leq 2K$.
It means that $2(q-K)$ signals are transmitted with the DST model and the remaining $2K-q$ ones are using SST model.
If $q= 2K$, all EM waves are PP, and all the $q$ transmitted signals can be expressed as
\begin{equation}
\label{sqt}
\begin{aligned}
\mathbf{s}_{Q}(t) =& [p_{1,A}\overline{s}_{1,A}(t),p_{1,B}\overline{s}_{1,B}(t),...,\\
					&p_{K,A}\overline{s}_{K,A}(t),p_{K,B}\overline{s}_{K,B}(t)]^T
\end{aligned}
\end{equation}
where all $\overline{s}_{k,A}(t)$ and $\overline{s}_{k,B}(t)$ obey zero-mean i.i.d. Gaussian distribution.
If $q<2K$, there will be $2K-q$ CP signals.
We have $p_{i,A}>0$ and $p_{i,B}=0$ for any CP signal $\mathbf{s}_i(t)$.
Without loss of generality, we can assume $\overline{s}_{i,B}(t)$ is also a zero-mean normalized Gaussian process which is independent to all the other signals.

With the expression of $\mathbf{s}_{Q}(t)$, we find that all $2K$ signals $\overline{s}_{1,A}(t),\overline{s}_{1,B}(t),...,\overline{s}_{K,A}(t),\overline{s}_{K,B}(t)$ obey zero-mean i.i.d. complex Gaussian distribution.
Then we obtain the second order statistical property of $\mathbf{s}_{\mathbb{K}}(t)$ in (\ref{sigvector}) by the following theorem which declares that all $K$ EM signals are independent in communication.
\begin{theorem}
\label{theorem1}
Assume that $\overline{s}_{k,A}(t)$ and $\overline{s}_{k,B}(t)$ for all $k\in\mathbb{K}$ obey zero-mean i.i.d. complex Gussian distribution. $\mathbf{s}_{\mathbb{K}}(t)$ follows complex Gaussian distribution $\mathcal{CN}(0,\textrm {diag}(\mathbf{R}_{s_1},..,\mathbf{R}_{s_K}))$.
\end{theorem}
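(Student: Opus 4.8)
\emph{Proof proposal.} The plan is to recognize $\mathbf{s}_{\mathbb{K}}(t)$ as a fixed deterministic linear image of the $2K$-dimensional vector of elementary transmitted signals, and then to invoke closure of the (circularly symmetric) complex Gaussian family under linear maps, reading off the mean, the pseudo-covariance and the covariance by block computations. First I would collect the per-wave representations (\ref{safterop1}) into a single stacked identity. Writing $\overline{\mathbf{s}}_Q(t) = [\overline{s}_{1,A}(t),\overline{s}_{1,B}(t),\ldots,\overline{s}_{K,A}(t),\overline{s}_{K,B}(t)]^T\in\mathbb{C}^{2K}$ and $\mathbf{G} = \textrm{diag}(\mathbf{\Gamma}_1\mathbf{\Lambda}_1,\ldots,\mathbf{\Gamma}_K\mathbf{\Lambda}_K)$, where $\mathbf{\Gamma}_k$ and $\mathbf{\Lambda}_k$ are the $2\times 2$ matrices of (\ref{safterop1}) for the $k$-th wave (with $p_{k,B}=0$ for CP/SST waves, their fictitious companion component $\overline{s}_{k,B}(t)$ appended by the stated convention), one has $\mathbf{s}_k(t)=\mathbf{\Gamma}_k\mathbf{\Lambda}_k[\overline{s}_{k,A}(t),\overline{s}_{k,B}(t)]^T$ and hence $\mathbf{s}_{\mathbb{K}}(t)=\mathbf{G}\,\overline{\mathbf{s}}_Q(t)$. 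By hypothesis the $2K$ entries of $\overline{\mathbf{s}}_Q(t)$ are i.i.d.\ circularly symmetric complex Gaussian, so $\mathbb{E}\{\overline{\mathbf{s}}_Q(t)\}=\mathbf{0}$, $\mathbb{E}\{\overline{\mathbf{s}}_Q(t)\overline{\mathbf{s}}_Q^H(t)\}=\mathbf{I}_{2K}$, and the properness relation $\mathbb{E}\{\overline{\mathbf{s}}_Q(t)\overline{\mathbf{s}}_Q^T(t)\}=\mathbf{0}$ holds.

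Next I would transport these three facts through $\mathbf{G}$. Since a linear transformation of a jointly complex Gaussian vector is again complex Gaussian, $\mathbf{s}_{\mathbb{K}}(t)$ is complex Gaussian; its mean is $\mathbf{G}\,\mathbb{E}\{\overline{\mathbf{s}}_Q(t)\}=\mathbf{0}$; its pseudo-covariance is $\mathbf{G}\,\mathbb{E}\{\overline{\mathbf{s}}_Q\overline{\mathbf{s}}_Q^T\}\,\mathbf{G}^T=\mathbf{0}$, so the distribution is genuinely of the $\mathcal{CN}$ (circular) type rather than merely a general complex Gaussian; and its covariance is $\mathbf{G}\mathbf{G}^H$. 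Because $\mathbf{G}$ is block diagonal, $\mathbf{G}\mathbf{G}^H=\textrm{diag}(\mathbf{\Gamma}_1\mathbf{\Lambda}_1\mathbf{\Lambda}_1^H\mathbf{\Gamma}_1^H,\ldots,\mathbf{\Gamma}_K\mathbf{\Lambda}_K\mathbf{\Lambda}_K^H\mathbf{\Gamma}_K^H)$, and the $k$-th block equals $\mathbb{E}\{\mathbf{s}_k(t)\mathbf{s}_k^H(t)\}=\mathbf{R}_{s_k}$ by the very definition (\ref{signalcovariance}), using $\mathbb{E}\{[\overline{s}_{k,A}(t),\overline{s}_{k,B}(t)]^T[\overline{s}_{k,A}(t),\overline{s}_{k,B}(t)]^H\}=\mathbf{I}_2$. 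This yields $\mathbf{s}_{\mathbb{K}}(t)\sim\mathcal{CN}(\mathbf{0},\textrm{diag}(\mathbf{R}_{s_1},\ldots,\mathbf{R}_{s_K}))$; and since for a Gaussian vector a block-diagonal covariance is equivalent to mutual independence of the corresponding sub-vectors, the $K$ EM signals $\mathbf{s}_1(t),\ldots,\mathbf{s}_K(t)$ are independent, which is precisely the assumption invoked throughout.

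The main obstacle is not the Gaussian algebra, which is routine, but keeping the bookkeeping airtight in two places. First, one must justify that the representation (\ref{safterop1}) applies uniformly to CP, PP and SST waves: for an SST/CP wave the fictitious independent zero-power component $\overline{s}_{k,B}(t)$ has to be appended so that every block of $\mathbf{G}$ is a genuine $2\times 2$ matrix, whereupon $\mathbf{R}_{s_k}=\mathbf{\Gamma}_k\mathbf{\Lambda}_k\mathbf{\Lambda}_k^H\mathbf{\Gamma}_k^H$ comes out singular in exactly the rank-one Jones form anticipated after (\ref{JonesVector}) (here the unitarity of $\mathbf{\Gamma}_k$ makes the eigenstructure transparent, but it is not strictly needed). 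Second, one must be explicit that ``$\overline{s}_{k,A},\overline{s}_{k,B}$ i.i.d.\ complex Gaussian'' is meant in the circularly symmetric sense, since the vanishing of the pseudo-covariance of $\mathbf{s}_{\mathbb{K}}(t)$ — what upgrades ``complex Gaussian'' to ``$\mathcal{CN}$'' in the conclusion — rests entirely on that convention. Once these two points are pinned down, the three one-line computations above close the proof.
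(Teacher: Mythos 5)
Your proposal is correct and follows essentially the same route as the paper's own proof: both express each $\mathbf{s}_k(t)=\mathbf{\Gamma}_k\mathbf{\Lambda}_k[\overline{s}_{k,A}(t),\overline{s}_{k,B}(t)]^T$, invoke closure of the complex Gaussian family under linear maps, and obtain the block-diagonal covariance from the independence of the elementary signals (your global computation of $\mathbf{G}\mathbf{G}^H$ is just the stacked form of the paper's pairwise check that $\mathbb{E}\{\mathbf{s}_p(t)\mathbf{s}_q^H(t)\}=\mathbf{0}_2$ for $p\neq q$). The only substantive addition on your side is the explicit verification that the pseudo-covariance vanishes, a circularity point the paper leaves implicit; this is a welcome tightening rather than a different argument.
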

\begin{proof}
According to (\ref{safterop1}), we can express $\mathbf{s}_k(t)$ in communication as
\begin{equation}
	\mathbf{s}_k(t) = \mathbf{\Gamma}_k\mathbf{\Lambda}_k[\overline{s}_{k,A}(t),\overline{s}_{k,B}(t)]^T,
\end{equation}
where $\mathbf{\Gamma}_k$ and $\mathbf{\Lambda}_k$ are defined similar as in (\ref{safterop1}), $\overline{s}_{k,A}(t)$ and $\overline{s}_{k,B}(t)$ are transmitted signals by the $k$th EM wave.

As $\mathbf{s}_k(t)$ is a summation of two independent signals following zero-mean i.i.d. complex Gussian distribution, it follows multi-variate complex Gussian distribution, and we $\mathbb{E}\lbrace \mathbf{s}_k(t) \rbrace = [0,0]^T$. We also have $\mathbb{E}\lbrace \mathbf{s}_k(t)\mathbf{s}^H_k(t) \rbrace = \mathbf{R}_{s_k}$ by definition.

Consider $p,q \in \mathbb{K}$ and $p \neq q$, we have
$$
\mathbb{E}\lbrace \mathbf{s}_{p}(t)\mathbf{s}^H_{q}(t)\rbrace =\mathbf{\Gamma}_p \mathbf{\Lambda}_p\mathbb{E}\lbrace \mathbf{\Upsilon} \rbrace \mathbf{\Lambda}^H_q \mathbf{\Gamma}_q^H,
$$
where
$$
\mathbf{\Upsilon} =
\begin{bmatrix}
\overline{s}_{p,A}(t)\overline{s}^H_{q,A}(t)&\overline{s}_{p,A}(t)\overline{s}^H_{q,B}(t)\\
\overline{s}_{p,B}(t)\overline{s}^H_{q,A}(t)&\overline{s}_{p,B}(t)\overline{s}^H_{q,B}(t)
\end{bmatrix}.
$$
As $\overline{s}_{k,A}(t)$ and $\overline{s}_{k,B}(t)$ for all $k\in\mathbb{K}$ obey zero-mean i.i.d. complex Gussian distribution, we have $\mathbb{E}\lbrace \overline{s}_{p,A}(t)\overline{s}^H_{q,A}(t) \rbrace = \mathbb{E}\lbrace \overline{s}_{p,A}(t)\overline{s}^H_{q,B}(t) \rbrace = \mathbb{E}\lbrace \overline{s}_{p,B}(t)\overline{s}^H_{q,A}(t) \rbrace = \mathbb{E}\lbrace \overline{s}_{p,B}(t)\overline{s}^H_{q,B}(t) \rbrace = 0$.
Then we have
$\mathbb{E}\lbrace \mathbf{\Upsilon} \rbrace = \mathbf{0}_2$ and $\mathbb{E}\lbrace \mathbf{s}_{p}(t)\mathbf{s}^H_{q}(t)\rbrace = \mathbf{0}_2$.

Applying the statistical property of all $\mathbf{s}_k(t)$ into $\mathbf{s}_{\mathbb{K}}(t)$ in (\ref{sigvector}), we have $\mathbb{E}\lbrace \mathbf{s}_{\mathbb{K}}(t) \rbrace = 0$ and $\mathbb{E}\lbrace \mathbf{s}_{\mathbb{K}}(t)\mathbf{s}^H_{\mathbb{K}}(t) \rbrace = \textrm {diag}(\mathbf{R}_{s_1},...,\mathbf{R}_{s_K})$. Proof is complete.
\end{proof}

\section{Numerical Results}
\label{NR}
In this section, we provide numerical results for the DOA estimation performance of the proposed methods.
All the results are obtained from 5000 independent Monte-Carlo experiments.
Unless otherwise specified, the arrays configuration in the experiments are shown as below with 10 cross-dipoles.
\begin{equation}
\label{arrayconfiginnr}
\begin{aligned}
\mathbb{S}_u &= \lbrace 0,1,2,3,4,5,6,7,8,9\rbrace,\\
\mathbb{S}_n &= \lbrace 1,2,3,4,5,6,12,18,24,30\rbrace,\\
\mathbb{S}_c &= \lbrace 0,3,5,6,9,10,12,15,20,25\rbrace,
\end{aligned}
\end{equation}
where $\mathbb{S}_u$, $\mathbb{S}_n$ and $\mathbb{S}_c$ denote the configuration of ULA, nested array and coprime array, respectively. These mean that $L_1 = L_2 = 5$ in nested array and $M = 3$ and $N=5$ in coprime array. In the experiments, the EM sources have unit power with $p^2_k=1$ and known number, $\varphi$ and $\psi$ are random variables following the uniform distribution in their domain of definition. The noises are equal power for all dipoles with variance $\sigma^2$ and the signal-to-noise ratio (SNR) is defined as $\textrm {SNR}=10\log_{10}\frac{\sum_{k=1}^Kp^2_k}{2k\sigma^2}=10\log_{10}\frac{1}{2\sigma^2}$.

The received signals are sampled by finite snapshots, and the covariance matrix is estimated by the sampled covariance matrix as
\begin{equation}
\label{infinitecm}
\mathbf{R}_{\mathbf{y}_{\mathbb{S},m}} \approx \widetilde{\overline{\mathbf{R}}}_{\mathbf{y}_{\mathbb{S},m}} = \frac{1}{Z}\sum_{z=1}^{Z}\mathbf{y}_{\mathbb{S},m}(z)\mathbf{y}^H_{\mathbb{S},m}(z),
\end{equation}
where $Z$ is the number of snapshots.

\subsection{Performance of Proposed Method}

First, we illustrate the MUSIC spectrum of two proposed methods with the nested array and the coprime array. We consider $15$ EM signals impinging on the two sparse one-bit cross-dipoles arrays. The signals are assumed to be sent from locations of normalized angles uniform distributed in $[-0.4,0.4]$. The DOPs of all signals are $0.5$. The SNR is set to be $10\textrm {dB}$. We use $200$ snapshots in this experiment.
Fig. \ref{spectrum} shows the MUSIC spectrum. Pictures in different rows correspond to different methods. Results from nested and coprime array are on the left and right column, respectively.
As we all know, nested array constructed in (\ref{arrayconfiginnr}) can identify 29 sources, while the coprime array can resolve 17 sources.
We use $15$ sources in order to be comparable, $15$ is bigger than the number of sensors which is $10$.
It is seen that all two scalar arrays using OB-MUSIC1 and the cross-dipole array using OB-MUSIC2 can identify sources more than sensors.
In this experiment, the DOP equals 0.5, which is not close to 1.
It has weak influence on the performance of OB-MUSIC1, as is revealed in Section \ref{OCSADOAE}.
Therefore, both OB-MUSIC1 and OB-MUSIC2 have the similar MUSIC spectrum, even though there exists some difference among peak heights of all methods.


 \begin{figure}[h]
	\centering
	\includegraphics[scale=0.5]{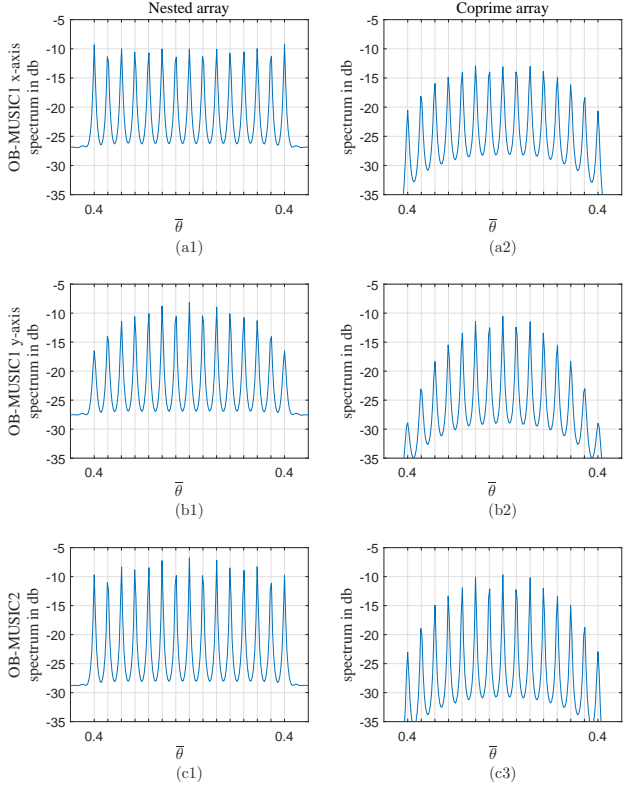}
	\caption{MUSIC spectrum of proposed methods with nested and coprime one-bit cross-dipoles array. (a) OB-MUSIC1 on $x$-axis, (b) OB-MUSIC1 on $y$-axis and (c) OB-MUSIC2. The (*1) are of nested array and (*2) are of coprime. Number of sources $K=15$, $\textrm {SNR} = 10\textrm {dB}$, number of snapshots $Z = 200$, 5000 Monte-Carlo runs}
	\label{spectrum}
\end{figure}

To explore the effect of DOP on OB-MUSIC1, we show the second experiment with the DOPs varying.
In this experiment, the number of sources is $K=5$, the DOA is $-0.4+0.2(k-1)$ for the $k$-th source.
The DOPs of all sources are the same, and they vary from $0$ to $1$ with step $0.1$.
The auxiliary polarization angle $\varphi_k$ is uniformly distributed in $[0,\pi/2]$.
SNR is set as $10\textrm {dB}$, and the number of snapshots is $200$.
The estimated DOAs are obtained by root-MUSIC. The performance is quantized by mean squared error (MSE) defined as $\textrm {MSE} = \sum_{k=1}^{K}(\widehat{\overline{\theta}}_k-\overline{\theta}_k)^2$.
For convenience, we abbreviate OB-MUSIC1 and OB-MUSIC2 as OB1 and OB2 in the following figures, respectively.
As shown in Fig. \ref{rou}, the performance of OB-MUSIC1 degenerates severely when DOPs increase. In contrary, OB-MUSIC2 is robust against DOPs.
When $\eta = 0$, the power is evenly distributed on the $x$ and $y$ axises.
But as $\eta$ increases, energy received by dipoles on each axis is more and more random because of the random $\varphi_k$, leading to the SNR of dipoles on each axis unstable.
Especially, the performance of OB-MUSIC1 on $y$ axis degrades faster than that on $x$ axis, because the energy impinging on $y$ axis is multiplied by an additional positive real number $b^2_{2}(\overline{\theta}_k)$ which is less than $1$.
This result confirms our statement in Section \ref{OCSADOAE}.

\begin{figure}[h]
	\centering
	\includegraphics[scale=0.5]{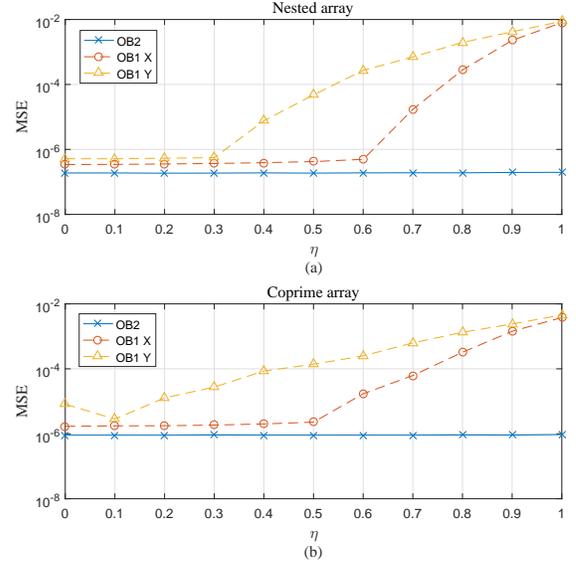}
	\caption{MSE versus DOP $\eta$, (a) on nested array, (b) on coprime array. Number of sources $K=5$, $\textrm {SNR} = 10\textrm {dB}$, number of snapshots $Z = 200$, 5000 Monte-Carlo runs.}
	\label{rou}
\end{figure}

Next, we verify the robustness of OB-MUSIC2 to PP and CP signals with varying $\text{SNR}$.
The DOPs of PP sources are random variables following uniform distribution $\mathcal{U}(0,0.99)$.
The DOPs of CP sources are $1$.
The $\text{SNR}$ vary from $-10\text{dB}$ to $20\text{dB}$ with step $5$.
The number of snapshots is set to $200$.
Fig. \ref{Rou_SNR} shows that the performance of OB-MUSIC2 is comparable for CP and PP signals on nested and coprime array when $\text{SNR}>-5\text{dB}$ and $\text{SNR}>0\text{dB}$ respectively.
However, in the low $\text{SNR}$ regime, the performance for PP signals is better than that for CP signals.
As we all know, the performance of SS-MUSIC will decrease obviously with the decreasing of $\text{SNR}$, if $\text{SNR}$ is lower than a threshold.
In Section \ref{OCSADOAE}, we have revealed that CP signals have bigger power loses than PP signals, leading that CP signals have lower total $\text{SNR}$s than PP signals.
As this, the performance of PP signals is better in the low $\text{SNR}$ regime, but in the high $\text{SNR}$ regime, the performance of the two kinds of signals is similar.

\begin{figure}[h]
	\centering
	\includegraphics[scale=0.5]{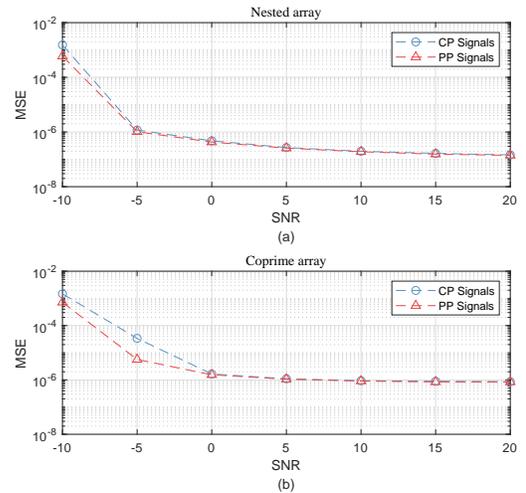}
	\caption{Performance of OB-MUSIC2 for PP and CP signals versus SNR, (a) on nested array, (b) on coprime array. Number of sources $K=5$, number of snapshots $Z = 200$, 5000 Monte-Carlo runs.}
	\label{Rou_SNR}
\end{figure}

Finally, we examine the increase of DOF on sparse arrays.
In this experiment, the array configurations are set to $\mathbb{S}_n = \lbrace 1,2,3,4,8,12\rbrace,$ and $\mathbb{S}_c = \lbrace 0,2,3,4,6,9\rbrace,$ which can detected up to 11 and 7 sources respectively \cite{pal2010nested, pal2011coprime}.
Here, we use fewer dipoles to make the results in Fig. \ref{DOFp} more intuitive.
The $\text{SNR}$ is set to be $10\text{dB}$.
The number of snapshots is $1000$.
The DOPs are set to be random variables following uniform distribution.
Fig. \ref{DOFp} shows that the OB-MUSIC2 successfully resolve 11 sources and 7 sources on the nested array and the coprime array respectively.

\begin{figure}[h]
	\centering
	\includegraphics[scale=0.5]{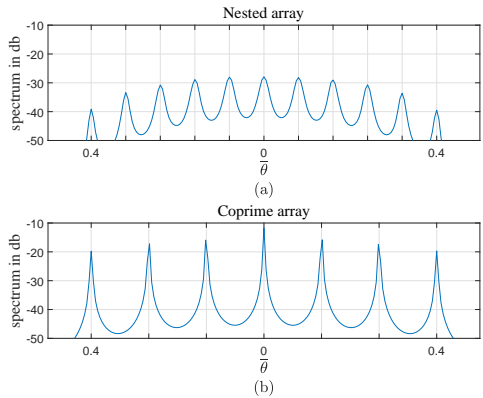}
	\caption{MUSIC spectrum of OB-MUSIC2 with nested and coprime one-bit cross-dipoles array, (a) 11 sources on nested array, (b) 7 sources on coprime array.  $\textrm {SNR} = 10\textrm {dB}$, number of snapshots $Z = 1000$, 5000 Monte-Carlo runs.}
	\label{DOFp}
\end{figure}

\subsection{One-bit measurements vs Unquantized measurements}

We now compare the performance between one-bit measurements and the unquantized one.
In the experiments, the sources are the same as that in Fig. \ref{rou} except DOPs.
The DOPs are set to be random variables following uniform distribution, since DOPs are usually unknown in applications.
OB-MUSIC1 will not be shown because its performance is much sensitive to DOPs.
The performances of these methods on ULA are also demonstrated as a comparison.
As a benchmark, the CRB of both one-bit and unquantized measurements are also provided.
The method in \cite{he2017direction} is used for unquantized measurements.
Although the method in \cite{he2017direction} was only developed for nested array, it can be easily extended to be suitable for coprime array by dropping the data which are out of the longest uniform part on the difference coarray.
Furthermore, we use SS-MUSIC developed in \cite{liu2015remarks} taking the place of SS-MUSIC step in \cite{he2017direction} to reduce the computational complexity.


\begin{figure*}[h]
	\centering
	\includegraphics[scale=0.5]{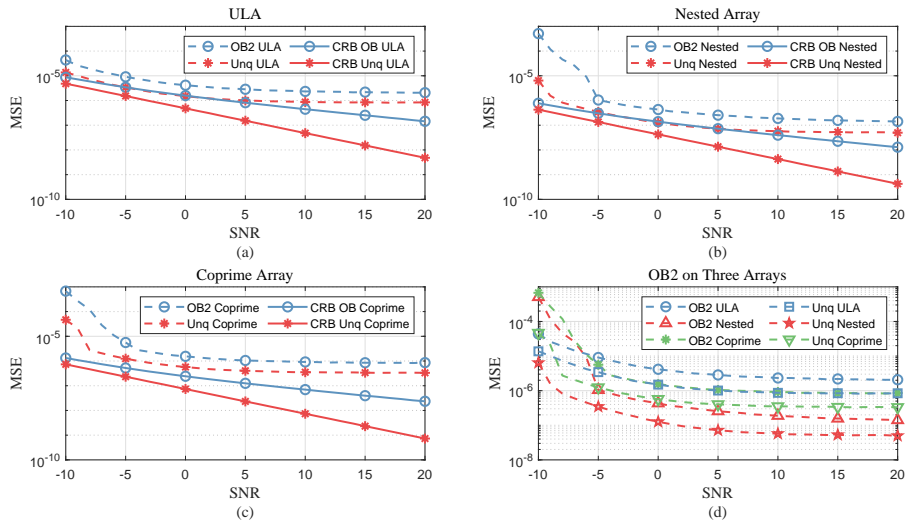}
	\caption{MSE of OB-MUSIC2 and CRB versus the SNR. Number of sources $K=5$, snapshots $Z = 200$, 5000 Monte-Carlo runs.}
	\label{SNR}
\end{figure*}

In Fig. \ref{SNR}, we compare the DOA estimation performance on the three arrays with varying SNR.
The number of snapshots is $200$. 
The quantization loss is defined as $10\log_{10}(\textrm{MSE}_{one-bit}/\textrm{MSE}_{unquantized})$ for performance metric.
In Fig. \ref{SNR} (a), (b) and (c), when the SNR is higher, the gap between the CRB of one-bit and unquantized measurments increases on all the three arrays, which validate the comments in Section \ref{CRB}.
In Fig. \ref{SNR} (a) and (b), when $-5 \text{dB}<\text{SNR}<10 \text{dB}$ the performance of SS-MUSIC on ULA and nested array using unquantized measurments is nearly to the CRB of one-bit measurments.
However, in Fig. \ref{SNR} (c), the performance of unquantized measurments on coprime can not reach the CRB of one-bit measurments within the same $\text{SNR}$ range.
The reason is that we have dropped some data by selecting measurements of the longest uniform part on the coprime array, but none data has been dropped on ULA or nested arrays.
In Fig. \ref{SNR} (d), when $\textrm{SNR} >-5 \text{dB}$, the one-bit nested array has better performance than the unquantized coprime array, and one-bit coprime array has comparable performance to the unquantized ULA.
Based on this observation, we find that one-bit sparse cross-dipole arrays provide a compromise between the DOA estimation performance and the system complexity.
When $\textrm{SNR}=0\textrm{dB}$, the quantization losses are $5.3\textrm{dB}$, $4.3\textrm{dB}$ and $4.4\textrm{dB}$ for the nested array, the coprime array and ULA, respectively.
But as SNR becomes smaller than $-5\textrm{dB}$, the performance of one-bit measurements deteriorates faster than that of unquantized ones, especially on the nested array, for example, when $\textrm{SNR}=-10 \textrm{dB}$, the quantization losses are $19.0\textrm{dB}$ and $11.6\textrm{dB}$ for nested and coprime array, respectively.
Interestingly, the quantization loss is robust to SNR on ULA.
When $\textrm{SNR} <-6 \textrm{dB}$, the performance of ULA is better than that of sparse arrays.
These results indicate that one-bit sparse cross-dipole arrays increase DOFs at the expense of the reduced anti-noise performance.

\begin{figure*}[h]
	\centering
	\includegraphics[scale=0.5]{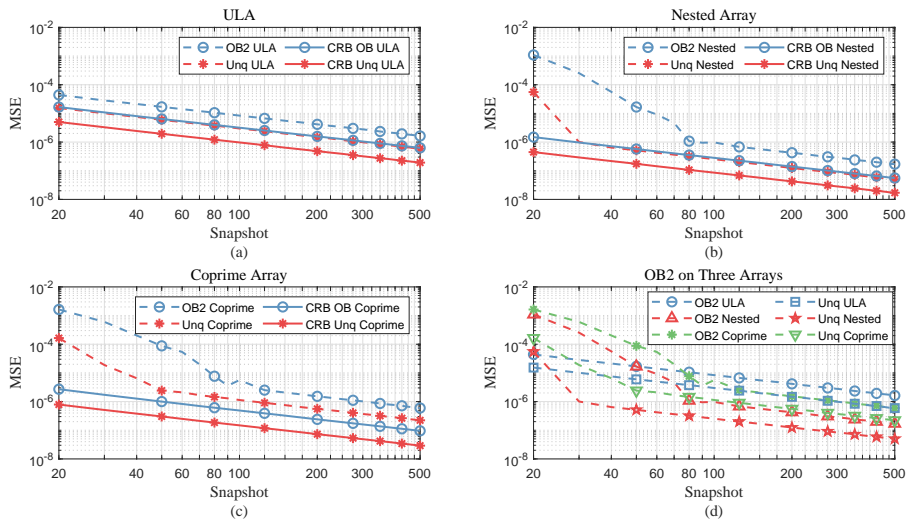}
	\caption{MSE of OB-MUSIC2 and CRB versus the snapshots. Number of sources $K=5$, $\textrm{SNR} = 0\textrm{dB}$, 5000 Monte-Carlo runs.}
	\label{snapshots}
\end{figure*}
Fig. \ref{snapshots} shows the MSE of the proposed method and the CRB on three arrays with varying snapshots.
SNR is set as $0\textrm{dB}$.
The MSE of three arrays and corresponding CRBs are shown in Fig. \ref{snapshots} (a), (b) and (c).
The MSE of unquantized measurements by SS-MUSIC on ULA and nested array is nearly the same as the CRBs of one-bit measurements on these arrays when the snapshots $Z>30$.
However, due to discarding the non-uniform part of coarray, there is a gap between MSE of SS-MUSIC and the one-bit measurement CRB on the coprime array when $Z>30$.
When the number of snapshots $Z$ is larger than $80$, the one-bit nested array has better performance than the unquantized coprime array.
When $Z\geq125$, the one-bit coprime array has comparable performance with the unquantized ULA.
The quantization losses are almost stable when $Z\geq100$.
For instance, when $Z=[125,500]$, the quantization losses are $[5.3\textrm{dB},5.3\textrm{dB}]$, $[4.4\textrm{dB},4.3\textrm{dB}]$ and $[4.4\textrm{dB},4.4\textrm{dB}]$ for the nested array, the coprime array and ULA, respectively.
When $Z$ reduces from $80$ to $20$, the quantization losses on sparse arrays increase fast, but it is robust on ULA. For instance, when $Z=40$, the quantization losses are $19.4\textrm{dB}$, $14.9\textrm{dB}$ and $4.5\textrm{dB}$ for the nested array, the coprime array and ULA, respectively.
This phenomenon may stem from the statistical efficiency of recovered covariance. 
As stated in \cite{wang2016coarrays}, the SS-MUSIC for sparse array via difference coarray decreases the statistical efficiency, while the covariance reconstruction from one-bit samples also depends on the statistical property of original unquantized covariance.
With the reduced snapshots, statistical efficiency of recovered covariance is not enough to cover the cost of coarray SS-MUSIC.
To reduce this gap, we will seek to develop more powerful methods to estimate the parameters without the difference coarray or the covariance reconstruction in the future works.

\begin{figure}[h]
	\centering
	\includegraphics[scale=0.5]{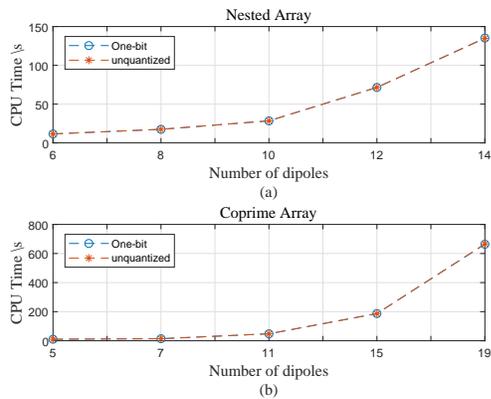}
	\caption{CPU time versus number of dipoles for two arrays, (a) nested array, (b) coprime array.}
	\label{computationalComplexity}
\end{figure}

In the last experiment, we compare the computational complexity of OB-MUSIC2 with that of SS-MUSIC in \cite{he2017direction}.
The number of snapshots is set to $200$.
SNR is set to $0\textrm{dB}$.
We use the nested arrays with $\left[ L_1,L_2\right] \in\left( \left[3,3\right],\left[4,4\right],\left[5,5\right],\left[6,6\right],\left[7,7\right] \right) $ and the coprime array with $\left[ M,N\right] \in\left( \left[2,3\right],\left[3,4\right],\left[5,6\right],\left[7,8\right],\left[9,10\right] \right) $ where $L_1,L_2,M,N$ are defined in (\ref{nestedarrays}) and (\ref{coprimearrays}) respectively.
On a Windows 10 workstation with two Intel Xeon E5-2660v2 cores and 64 GB RAM, 5000 Monte-Carlo runs has been taken without parallel computing.
As shown in Fig. \ref{computationalComplexity}, the CPU running time of the proposed method is almost the same as the SS-MUSIC in \cite{he2017direction}, which means the covariance matrix reconstruction has no significant effect on the computational complexity.
Hence the complexity is dominated by the eigen-decomposition in both methods.

\section{Conclusion}
\label{Conclusion}
This paper proposed a one-bit measurement scheme for cross-dipoles sparse array used to estimating DOAs of EM signals.
We presented the DOA estimation method based on SS-MUSIC with robust performance on solving both PP and CP signals.
We also derive the CRB of DOA estimation.
For the critical assumption that signals are independent, we theoretically validated its reasonability in communication.
The assumption is the prerequisite of the proposed and other typical DOA estimation methods for EM signals.
Numerical results revealed that the quantization loss stemming from one-bit measurement is stable when SNR and the number of snapshots are larger than a threshold (e.g. the threshold is $\textrm{SNR} = -5 \textrm{dB}$ and $Z = 100$ in our experiments).
More importantly, one-bit sparse cross-dipoles arrays have comparable performance to the unquantized ULA with same sensors, and therefore provide a compromise between the DOA estimation performance and the system complexity.

\bibliographystyle{IEEEtran}
\bibliography{doa}
\end{document}